\documentclass[journal]{IEEEtran}
\usepackage[dvips]{graphicx}
\usepackage{amsmath, amsthm, amssymb, amsfonts, mathtools}
\usepackage{cases, multirow}
\usepackage{srcltx,color}
\usepackage{cite, url}
\usepackage{subcaption}

\usepackage{algorithm, algpseudocode}

% \usepackage[inner]{showlabels}

% notes and comments
\long\def\comment#1{}

\newtheorem{thm}{Theorem}

\newtheorem{corollary}{Corollary}

\newtheorem{remark}{Remark}

\def\figref#1{Fig.~\ref{#1}}
\def\be{\begin{equation} }
\def\ee{\end{equation} }

\title{\huge A Stochastic Geometry Based Approach to Modeling Interference Correlation in Cooperative Relay Networks}

\begin{document}

\author{
   \IEEEauthorblockN{Young Jin Chun, Simon L. Cotton, Mazen O. Hasna, and Ali Ghrayeb}
   \thanks{Y. J. Chun and S. L. Cotton are with the Wireless Communications Laboratory, ECIT Institute, Queens University Belfast, United Kingdom
   (Email: Y.Chun@qub.ac.uk, simon.cotton@qub.ac.uk).}
   \thanks{M. O. Hasna is with Department of Electrical Engineering, Qatar University, Doha, Qatar (Email: hasna@qu.edu.qa).} 
   \thanks{A. Ghrayeb is with Department of Electrical and Computer Engineering, Texas A\&M University at Qatar, Doha, Qatar 
    (Email: ali.ghrayeb@qatar.tamu.edu).}
   \thanks{This work was supported in part by the Engineering and Physical Sciences Research Council (EPSRC) under Grant References EP/H044191/1 and EP/L026074/1, and in part by the NPRP Grant 4-1119-2-427 from the Qatar National Research Fund (a member of Qatar Foundation). The statements made herein are solely the responsibility of the authors.}
 } 

\maketitle

% %%% Line Spacing Control %%%
% % \singlespacing
% % \onehalfspacing
% \doublespacing
% % \setstretch{2.0}
{\it \bf Abstract}-
{ 
Future wireless networks are expected to be a convergence of many diverse network technologies and architectures, such as cellular networks, wireless local area networks, sensor networks, and device to device communications. Through cooperation between dissimilar wireless devices, this new combined network topology promises to unlock ever larger data rates and provide truly ubiquitous coverage for end users, as well as enabling higher spectral efficiency. However, it also increases the risk of co-channel interference and introduces the possibility of correlation in the aggregated interference that not only impacts the communication performance, but also makes the associated mathematical analysis much more complex. To address this problem and evaluate the communication performance of cooperative relay networks, we adopt a stochastic geometry based approach by assuming that the interfering nodes are randomly distributed according to a Poisson point process (PPP). We also use a random medium access protocol to counteract the effects of interference correlation. Using this approach, we derive novel closed-form expressions for the successful transmission probability and local delay of a relay network with correlated interference. As well as this, we find the optimal transmission probability $p$ that jointly maximizes the successful transmission probability and minimizes the local delay. Finally numerical results are provided to confirm that the proposed joint optimization strategy achieves a significant performance gain compared to a conventional scheme. 
}

\begin{IEEEkeywords}
  Relay network, cooperative communication, stochastic geometry, Interference correlation, mean local delay.
\end{IEEEkeywords}

%%%%%%%%%%%%%%%%%%%%%%%%%%%%%%%%%%%%%%%%%%%%%%%%%%%%%%%%%%%
\section{Introduction}
%%%%%%%%%%%%%%%%%%%%%%%%%%%%%%%%%%%%%%%%%%%%%%%%%%%%%%%%%%%

Cooperative relaying is an effective technique for improving the reliability and throughput of the traditional point-to-point communication. The approach was first proposed by Cover and El Gamal in \cite{Cover1979} and revisited in \cite{Sendonaris2003, Laneman2004}. In \cite{Laneman2004}, the authors proposed several relaying protocols, such as amplify-and-forward (AF), decode-and-forward (DF), and selection relaying, and evaluated the performance of these relaying protocols. 
In \cite{Laneman2003}, the authors combined relaying with space-time coding. 
More advanced relaying protocols, such as compress-and-forward and compute-and-forward, were introduced in \cite{Kramer2005, Kim2008, Aleksic2009, Sanderovich2009, Lim2011, Nazer2011}. In compress-and-forward relaying, the relay observe a vector-quantized signal and forward this information to the destination. 
% As multiple relays participate and cooperate the communication, the network act as a virtual multiple-input multiple-output (MIMO) channel
% \cite{Kramer2005, Kim2008, Aleksic2009, Sanderovich2009, Lim2011}.  
In compute-and-forward (CF) relaying, the relays decode and forward linear equations of transmitted messages using the noisy linear combinations provided by the channel. Once the destination node receives enough linear combinations, it can successfully detect the desired messages \cite{Nazer2011}.
Most of the previous work on cooperative relaying has focused on orthogonal channel allocation or noise-limited fading environments, ignoring co-channel interference. However, this assumption is not realistic due to the high spectral reuse in practical wireless networks. The effect of co-channel interference on cooperative relaying has been studied in \cite{Krikidis2009, Zhong2010, Si2010} assuming fixed interfering node deployments. In \cite{Krikidis2009}, the authors considered a relay network with interference affecting only the relay. While AF and DF relaying in an interference network were investigated in \cite{Zhong2010} and \cite{Si2010}, respectively.

In a real wireless network, however, it is more practical to assume that the interference and node locations are random due to mobility. Moreover, analyzing a specific instance of the network with a fixed node deployment does not provide a general result. Instead, a statistical statement about the ensembles of all possible node deployments is much more beneficial to assist in understanding the network performance. In this context, stochastic geometry has recently gained much attention. This method models interference in the network by treating the locations of the interferer as points distributed according to a spatial point process \cite{Haenggi2013}. Such an approach captures the topological randomness in the network geometry, provides well-established mathematical tools, allows high analytical flexibility and achieves an accurate performance evaluation. A common assumption in most of the related works is that the interfering nodes are distributed according to a homogeneous Poisson point process (PPP). Two important properties of PPP are stationarity, \textit{i.e.}, its distribution is invariant under arbitrary translation, and the Slivnyak’s theorem, which means that conditioning on a certain point does not change the distribution of the process \cite{Haenggi2013}. Due to these two properties, the PPP model is analytically tractable and flexible. The probability generating functional (PGFL) of PPP is derived in closed-form, and the distribution of the inter-node distance is known \cite{Haenggi2005}. The Laplace transform of the interference in a PPP network as well as the probability density function of the aggregated interference were analyzed for Rayleigh fading channels in \cite{Mathar1995, Andrews2011}. The outage probability and average achievable rate of heterogeneous cellular networks (HetNets) were evaluated for PPP and Poisson cluster process (PCP) in \cite{Jo2012} and \cite{Chun2015}, respectively. Cooperative relaying with PPP distributed interfering nodes has been investigated in \cite{Kountouris2009, Ganti2009, Cho2011, Yu2010}. In \cite{Kountouris2009}, the authors derived the throughput scaling law for opportunistic relay selection, while in \cite{Ganti2009}, decentralized relay selection schemes based on the location information or the received signal strength were proposed. To ensure a certain quality of service (QoS) at the destination node, the authors of \cite{Cho2011} have defined a QoS region for random relay selection.

Although assuming a homogeneous PPP offers a convenient method to model a network with uniformly distributed interfering nodes, using the PPP alone is not enough to accurately capture the real aspects of practical wireless networks. An important example of this occurs in the statistical behavior of the aggregated interference when the non-desired signals are correlated in either the space or time domain. Interference usually originates from a set of transmitters sharing common randomness, causing correlation in the aggregated interference. 
% Even though the underlying node deployment follows PPP, the adopted communication technologies and transmission protocols may introduce correlation in the aggregated interference that significantly affects the communication performance. 
The impact of interference correlation was not properly reflected in the stochastic geometry based modeling process until very recently \cite{Ganti2009a, Schilcher2012, Haenggi2012, Crismani2013, Zhong2014}. It was first addressed in \cite{Ganti2009a} for random wireless networks with PPP distributed nodes, where the spatio-temporal correlation coefficient was introduced. The temporal correlation coefficient was evaluated in \cite{Schilcher2012} for general network models, including the static and random node locations for various traffic types. The diversity loss of a multi-antenna receiver due to interference correlation was analyzed in \cite{Haenggi2012}. For cooperative relaying, the interference correlation occurs between different receivers that are closely located to each other. In \cite{Crismani2013}, the authors assumed the PPP model for the interfering nodes and proved that the temporal and spatial correlation of interference significantly degrades the performance of the cooperative relay.

One effective method to reduce the interference correlation is to intentionally induce man-made randomness by using random medium access, \textit{i.e.}, increasing randomness in the MAC domain, more specifically using frequency-hopping multiple access (FHMA) and ALOHA, which helps to reduce the effect of interference correlation. In ALOHA, each node transmits with a certain probability $p$. Decreasing the transmit probability increases the uncertainty in the active interfering nodes and reduces the interference, thereby reducing correlation. In \cite{Zhong2014}, the authors analyzed the local delay, which is the time it takes for a node to successfully transmit to a nearby neighbor, using FHMA and ALOHA on PPP and determined the optimal number of sub-bands in FHMA and the optimal transmit probability in ALOHA that minimizes the local delay. Stamatiou and Haenggi \cite{Stamatiou2013} applied this approach to a multi-hop relay network and evaluated the local delay of time division multiple access (TDMA) and ALOHA protocols. However, the authors assumed the nodes to be aligned on a one dimensional straight line which limits the application of their work.

To the best of our knowledge, with the exception of \cite{Stamatiou2013}, there is no previous work that has considered the effect of the MAC protocol on interference correlation for a cooperative relay network. This motivates us to consider a more realistic relaying network where the nodes are spread over a two dimensional vector space and to evaluate the effect of interference correlation on the communication performance. In particular, for the first time, we compare the successful transmission probability and local delay of a relay network, which are subject to both correlated and uncorrelated interference. 
We also propose an efficient optimization strategy that jointly maximizes the success probability and minimizes the local delay. Specifically, we provide the following theoretical contributions.
\begin{enumerate}
 \item We analyze the successful transmission probability and local delay of a relay network with PPP interfering nodes. We consider both interference correlated and uncorrelated cases.
 \item We determine the necessary and sufficient condition to jointly maximize the success probability and minimize the local delay, and derive the optimal transmission probability $p$ that achieves this optimality.
 \item We propose an optimization strategy that numerically finds the optimal $p$, and then compare the computational complexity of our proposed strategy to that of the conventional brute-force method.
  \item We provide numerical results to validate the analysis, evaluate the performance of the interference correlated and uncorrelated cases, and compare the performance gain that our proposed optimization strategy achieves compared to the conventional scheme. 
\end{enumerate}

The rest of this paper is organized as follows. 
In Section II, we describe the system model and introduce the mathematical background for the interference analysis performed here. 
We derive the successful transmission probability and local delay for the interference correlated and uncorrelated cases in Section III.
In Section IV, we determine the necessary and sufficient condition to achieve the joint optimality between the success probability and local delay, and propose an optimization strategy based on iterative numerical search. Section V provides some numerical results based on our approach. Finally, Section VI concludes the paper with some closing remarks.

%%%%%%%%%%%%%%%%%%%%%%%%%%%%%%%%%%%%%%%%%%%%%%%%%%%%%%%%%%%
\section{System and Mathematical Models}
%%%%%%%%%%%%%%%%%%%%%%%%%%%%%%%%%%%%%%%%%%%%%%%%%%%%%%%%%%%

\begin{figure}[!htbp]
    \centering
    \includegraphics[width=0.4\textwidth]{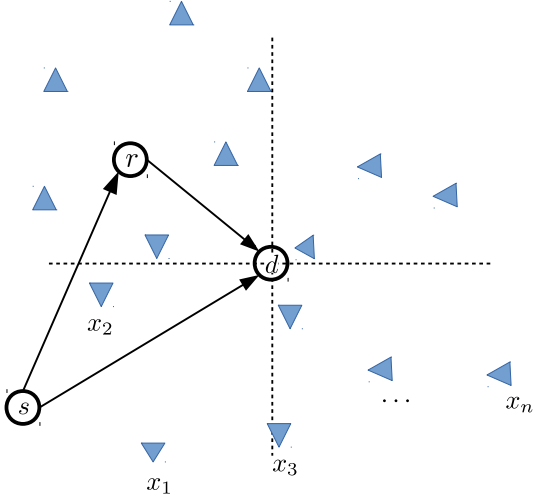} 
    \caption{System Model}
    \label{fig.system_model}
\end{figure}

\subsection{System Model}

We consider a three node relay network, consisting of a source, relay, and destination, where multiple interfering nodes simultaneously transmit during each time slot, as illustrated in \figref{fig.system_model}. We denote the source, relay, destination, and the interfering node as $s$, $r$, $d$, and $x$, respectively, where the notations denote both the nodes and their coordinates. We assume that the interfering nodes are randomly distributed according to a Poisson point process (PPP) $\Phi$ of intensity $\lambda$, \textit{i.e.}, $x \in \Phi$, and the destination is located at the origin $d = (0,0)$. The source, relay, and interfering nodes transmit with power $P_s$, $P_r$, and $P_x$, respectively.

We use the ALOHA protocol with transmit probability $p$ on each time slot, \textit{i.e.}, 
the source attempts to access a slot with success probability $p$.
Given the channel access, the source transmits its packet to the relay. 
We follow the approach of \cite{Crismani2013} and \cite{Liu2007} by assuming that a certain time is reserved for the relay transmission
right after the source transmission, so that the relay forwards the packet to the destination within the same time slot. 
If the relayed transmission fails, the source attempts to re-transmit during the next time slot. 
We assume that the link between source and destination is unreliable, so the transmission occurs only through the relay, \textit{i.e.}, there is no direct link between $s \rightarrow d$.

The distance between arbitrary node $i$ and $j$ is denoted by $||i-j||$ 
and the path loss function between two nodes is given by $l(i,j) = ||i-j||^{-\alpha}$, 
where $\alpha > 2$ is the path loss exponent. 
The channel links are assumed to be subject to independent and identically distributed (i.i.d.) Rayleigh fading with mean one, 
where the channel coefficient between node $i$ and $j$ is denoted by $h_{i j}$. The additive noise $w$ is assumed to be complex Gaussian distributed 
with mean zero and power spectral density $N_0$.

\subsection{Mathematical Model}

Let $\Phi_k$ denote the set of active interfering nodes in time slot $k$, \textit{i.e.}, $\Phi_k \in \Phi$.
The aggregated interference at the destination during time slot $k$ is 
\begin{equation}
	\begin{split}
	I_{k, d} = P_x \sum_{x \in \Phi \backslash \{s\}} h_{x d} l(x,d) \textbf{1}(x \in \Phi_k),
	\end{split}
	\label{gene.eq_1}
\end{equation}
and the interference at the relay during time $k$ is given by 
\begin{equation}
	\begin{split}
	I_{k, r} = P_x \sum_{x \in \Phi \backslash \{s\}} h_{xr} l(x,r) \textbf{1}(x \in \Phi_k),
	\end{split}
	\label{gene.eq_2}
\end{equation}
where $\textbf{1}(\cdot)$ is the indicator function. Then, the signal to interference plus noise ratio (SINR) between $s \rightarrow r$ and $r \rightarrow d$ links during time slot $k$ is
\begin{equation}
	\begin{split}
	\gamma_{k, sr} &= \frac{P_s h_{sr} l(s,r)}{N_0 + I_{k, r}} = \frac{\widehat{P}_s h_{sr} l(s,r)}{\widehat{N} + I_{k, r}^{'}},\\
	\gamma_{k, rd} &= \frac{P_r h_{rd} l(r,d)}{N_0 + I_{k, d}} = \frac{\widehat{P}_r h_{rd} l(r,d)}{\widehat{N} + I_{k, d}^{'}},
	\end{split}
	\label{gene.eq_3}
\end{equation}
where $\widehat{P_s} \triangleq P_s/P_x$, $\widehat{P_r} \triangleq P_r/P_x$, 
$\widehat{N} \triangleq N_0/P_x$, 
$I_i^{'} \triangleq I_i/P_x$.

In this paper, we consider the cases of both correlated and uncorrelated interference. For correlated interference,
we assume that the interference at the destination and the relay originate from the same set of interfering nodes,
\textit{i.e.}, $\Phi_k$ in (\ref{gene.eq_1}) and (\ref{gene.eq_2}). For uncorrelated interference, the interference at the destination 
and the relay are generated by two different sets of interfering nodes, 
\textit{i.e.}, $\Phi_k \neq \Phi_k^{'}$, such that the interference model in (\ref{gene.eq_1}), (\ref{gene.eq_2}) must be modified. 
The uncorrelated interference case is analyzed in detail in Section III-C.

%%%%%%%%%%%%%%%%%%%%%%%%%%%%%%%%%%%%%%%%%%%%%%%%%%%%%%%%%%%
\section{Analysis of the System Measures}
%%%%%%%%%%%%%%%%%%%%%%%%%%%%%%%%%%%%%%%%%%%%%%%%%%%%%%%%%%%

\subsection{Packet Delivery Probability}

The packet delivery probability, \textit{i.e.}, the successful transmission probability between $s \rightarrow r \rightarrow d$, 
is now derived in Theorem 1.

\begin{thm}
	The packet delivery probability $P(\mathcal{C})$ of the three node relay network which is subject to correlated interference is given by
	\begin{equation}
		\begin{split}
		P(\mathcal{C}) &= p \exp\left( - \lambda p \psi(r) - B \right),
		\end{split}
		\label{gene.eq_10}
	\end{equation}
	where $\psi(r)$ is defined in (\ref{gene.eq_11}), 
	$B \triangleq \widehat{N} \left( \frac{\theta_{sr}}{\widehat{P}_s} + 
	\frac{\theta_{rd}}{\widehat{P}_r} \right)$, and 
	$\theta_{ij} \triangleq \theta/l(i, j) = \theta ||i-j||^{\alpha}$.
\end{thm}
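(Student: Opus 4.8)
The plan is to condition on the ALOHA channel access and then evaluate the joint success probability of the two hops over a realization of the interference field. Since the source accesses the slot with probability $p$ independently of the fading and the point process, we have $P(\mathcal{C}) = p \, \mathbb{P}(\gamma_{k,sr} \geq \theta,\, \gamma_{k,rd} \geq \theta)$, where successful delivery requires both hop SINRs to exceed the threshold $\theta$. Rearranging the SINR definitions in (\ref{gene.eq_3}) and using that the fading powers $h_{sr}$ and $h_{rd}$ are unit-mean exponential, each hop requirement becomes a tail event of the form $h_{sr} \geq (\theta_{sr}/\widehat{P}_s)(\widehat{N} + I'_{k,r})$, and analogously $h_{rd} \geq (\theta_{rd}/\widehat{P}_r)(\widehat{N} + I'_{k,d})$ for the second hop.

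First I would exploit the independence of $h_{sr}$ and $h_{rd}$: conditioned on the interference terms $I'_{k,r}$ and $I'_{k,d}$, the two threshold probabilities factorize, each contributing a complementary-exponential factor $\exp(-(\theta_{sr}/\widehat{P}_s)(\widehat{N}+I'_{k,r}))$ and $\exp(-(\theta_{rd}/\widehat{P}_r)(\widehat{N}+I'_{k,d}))$. The deterministic noise contributions collect into $\exp(-\widehat{N}(\theta_{sr}/\widehat{P}_s + \theta_{rd}/\widehat{P}_r)) = e^{-B}$, leaving a joint Laplace-type expectation over the two interference sums to be evaluated.

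The core step is to compute $\mathbb{E}[\exp(-s_1 I'_{k,r} - s_2 I'_{k,d})]$ with $s_1 = \theta_{sr}/\widehat{P}_s$ and $s_2 = \theta_{rd}/\widehat{P}_r$. Here is where the correlation matters: both $I'_{k,r}$ and $I'_{k,d}$ are driven by the same active set $\Phi_k$, so a single interferer $x$ contributes to both sums through the same indicator $\mathbf{1}(x\in\Phi_k)$. I would therefore handle each node's contribution jointly before invoking the PGFL. Averaging over the two independent unit-mean exponential fades $h_{xr}, h_{xd}$ gives the node factor $1/[(1+s_1 l(x,r))(1+s_2 l(x,d))]$; thinning by ALOHA then replaces this by $1 - p + p/[(1+s_1 l(x,r))(1+s_2 l(x,d))]$. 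Applying the PPP PGFL exponentiates the integral of one minus this factor, namely $\lambda p \int_{\mathbb{R}^2}(1 - 1/[(1+s_1 l(x,r))(1+s_2 l(x,d))])\,dx$; identifying this integral with $\psi(r)$ yields $P(\mathcal{C}) = p\exp(-\lambda p \psi(r) - B)$.

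The main obstacle is the correlation-induced coupling in the final step: because the two hops share the active interferer set, the joint transform does not split into a product of two marginal Laplace transforms, and the node factor carries a product $(1+s_1 l(x,r))^{-1}(1+s_2 l(x,d))^{-1}$ rather than the product of two separately averaged terms. Keeping this product intact inside the single PGFL integral — rather than prematurely decoupling the hops — is precisely what distinguishes the correlated case from the uncorrelated one of Section III-C, where two independent active sets would instead produce a product of two separate PGFL integrals.
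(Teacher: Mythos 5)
Your proposal is correct and follows essentially the same route as the paper's Appendix I: the ALOHA factor $p$, conversion of the joint SINR event into a joint Laplace transform via the exponential fading tails, per-node averaging over the two fades and the shared ALOHA indicator $\textbf{1}(x\in\Phi_k)$ to obtain the factor $1-p+p/[(1+\theta_{sr}l(x,r)/\widehat{P}_s)(1+\theta_{rd}l(x,d)/\widehat{P}_r)]$, and finally the PPP PGFL to produce $\exp(-\lambda p\psi(r))$. The only cosmetic difference is that the paper routes the argument through the conditional probability $P(\mathcal{C}_\Phi)$ (needed later for the local delay) before averaging over $\Phi$, whereas you compute the unconditional probability directly; the substance is identical.
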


\begin{proof}
See Appendix I. 
\end{proof}

\subsection{Mean Local Delay}

The source attempts to re-transmit if the destination fails to receive the packet during the previous time slot. In general, a successful transmission during each time slot is a dependent event due to the correlated interference \cite{Zhong2014}. However, if we consider the conditional success event for a given $\Phi$, the randomness stems only from the channel fading coefficient and the ALOHA protocol which are independent variables for each time slot. Therefore, the success event in different time slots given $\Phi$ are independent with probability $P\left( \mathcal{C}_{\Phi} \right)$ in (\ref{gene.eq_5}). 

Let us define the local delay as the number of time slots required until a successful transmission occurs.
Then, the local delay given $\Phi$, denoted by $\Delta_{\Phi}$, can be represented as a geometric random variable written below
\begin{equation}
	\begin{split}
		P\left( \Delta_{\Phi} = k \right) = \left( 1 - P\left( \mathcal{C}_{\Phi} \right) \right)^{k-1} P\left( \mathcal{C}_{\Phi} \right),
		\quad
		k \geq 1.
	 \end{split}
	\label{gene.eq_10_1}
\end{equation}
The mean local delay $D(p) \triangleq \mathbb{E}\left[ \mathbb{E}\left( \Delta_{\Phi} \right) \right]$ averaged over all possible $\Phi$ is now derived in Theorem 2 below.

\begin{thm}
	The mean local delay $D(p)$ of the three node relay network for correlated interference is given by
	\begin{equation}
		\begin{split}
		D(p) 
		%&= \frac{e^{B}}{p} \exp\left( - \lambda \int_{R^2} \left[ 1 - \frac{1}{\upsilon(x)} \right]dx\right) \\
		&= \frac{1}{p} \exp\left( \lambda p \varphi(r) + B \right),
		\end{split}
		\label{gene.eq_13}
	\end{equation}
	where $\varphi(r)$ is defined in (\ref{gene.eq_12}) and $B \triangleq \widehat{N} \left( \frac{\theta_{sr}}{\widehat{P}_s} + 
	\frac{\theta_{rd}}{\widehat{P}_r} \right)$.
\end{thm}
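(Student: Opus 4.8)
The plan is to exploit the fact, already established in the discussion preceding the theorem, that conditioned on a fixed realization of $\Phi$ the per-slot success events are independent with common probability $P(\mathcal{C}_\Phi)$, so that $\Delta_\Phi$ is geometric and its conditional mean is simply $\mathbb{E}(\Delta_\Phi \mid \Phi) = 1/P(\mathcal{C}_\Phi)$. Averaging over the point process then gives
\begin{equation}
  D(p) = \mathbb{E}\left[ \frac{1}{P(\mathcal{C}_\Phi)} \right],
\end{equation}
and the whole problem reduces to computing this expectation of a reciprocal.

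First I would insert the conditional success probability $P(\mathcal{C}_\Phi)$ from (\ref{gene.eq_5}). For correlated interference the relay and destination see the same active set, and because the fading is i.i.d. Rayleigh each interferer contributes independently, so $P(\mathcal{C}_\Phi)$ carries the product form
\begin{equation}
  P(\mathcal{C}_\Phi) = p \, e^{-B} \prod_{x \in \Phi \backslash \{s\}} \left[ (1-p) + \frac{p}{(1 + a_x)(1 + b_x)} \right],
\end{equation}
where $a_x \triangleq \frac{\theta_{sr}}{\widehat{P}_s} l(x,r)$ and $b_x \triangleq \frac{\theta_{rd}}{\widehat{P}_r} l(x,d)$ collect the path loss to the relay and destination. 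Taking the reciprocal, the prefactors $1/p$ and $e^{B}$ pull outside the expectation immediately, leaving a product of per-node reciprocals inside.

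The decisive step is then to apply the probability generating functional of the PPP to this product. Because the reciprocal of a product is again a product, the PGFL applies with the test function $f(x) = \bigl[ (1-p) + p/((1+a_x)(1+b_x)) \bigr]^{-1}$, yielding
\begin{equation}
  \mathbb{E}\left[ \prod_{x \in \Phi \backslash \{s\}} f(x) \right] = \exp\left( -\lambda \int_{\mathbb{R}^2} \bigl( 1 - f(x) \bigr) \, dx \right).
\end{equation}
A short rearrangement turns $1 - f(x)$ into $-p(1 - c_x)/[(1-p) + p c_x]$ with $c_x = 1/((1+a_x)(1+b_x))$, so the exponent collapses to $\lambda p \varphi(r)$ once $\varphi(r)$ is identified with the integral in (\ref{gene.eq_12}); combining with the $1/p$ and $e^{B}$ prefactors reproduces exactly (\ref{gene.eq_13}).

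The main obstacle is conceptual rather than computational: one must resist reading $D(p)$ off Theorem 1 as $1/P(\mathcal{C})$. By Jensen's inequality $\mathbb{E}[1/P(\mathcal{C}_\Phi)] \geq 1/\mathbb{E}[P(\mathcal{C}_\Phi)]$, so the conditioning on $\Phi$ must be retained and the spatial average taken only at the end through a \emph{fresh} PGFL evaluation on the reciprocal integrand, not on the integrand of Theorem~1. The remaining technical point is to confirm convergence of $\int_{\mathbb{R}^2}(1 - f(x))\,dx$: near $x = r$ (or $x = d$) the path loss diverges and $c_x \to 0$, but the integrand stays bounded by $p/(1-p)$, while at large $\|x\|$ one has $1 - c_x = O(\|x\|^{-\alpha})$, which is integrable in the plane precisely because $\alpha > 2$. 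This guarantees $\varphi(r)$ is finite and hence $D(p) < \infty$ for every $p \in (0,1)$.
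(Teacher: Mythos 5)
Your proposal is correct and follows essentially the same route as the paper's Appendix~II: the conditional-geometric argument giving $D(p)=\mathbb{E}[1/P(\mathcal{C}_\Phi)]$, insertion of the product form of $P(\mathcal{C}_\Phi)$ from (\ref{gene.eq_5})--(\ref{gene.eq_6}), and a fresh PGFL evaluation on the reciprocal test function, with your identity $1-1/\upsilon(x)=-p(1-c_x)/[(1-p)+pc_x]$ matching the paper's simplification $\int_{R^2}[1-1/\upsilon(x)]\,dx=-p\varphi(r)$. The Jensen's-inequality caution and the convergence check of the integral are welcome additions but do not change the argument.
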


\begin{proof}
See Appendix II. 
\end{proof}

%%%%%%%%%%%%%%%%%%%%%%%%%%%%%%%%%%%%%%%%%%%%%%%%%%%%%%%%%%%%%%%%
\begin{figure*}[!htbp]
    \begin{equation}
    \begin{split}
    \psi(r) &\triangleq  \int_{R^2} \left[ 1 - \frac{1}{\left( 1 + \frac{\theta_{sr}l(x,r)}{\widehat{P}_s} \right)
    \left( 1 + \frac{\theta_{rd}l(x,d)}{\widehat{P}_r} \right)
    }\right] dx
    = \int_{R^2} \left[ 1 - \frac{1}{\left( 1 + \frac{\theta_{sr}||x-r||^{-\alpha}}{\widehat{P}_s} \right)
    \left( 1 + \frac{\theta_{rd} ||x||^{-\alpha}}{\widehat{P}_r} \right)
    }\right] dx,
      \end{split}
        \label{gene.eq_11}
    \end{equation}

      \begin{equation}
       \begin{split}
    \varphi(r) &\triangleq  \int_{R^2} \frac{f(x)}{1 + (1-p)f(x)}dx, \quad 
      f(x) \triangleq \left( 1 + \frac{\theta_{sr}l(x,r)}{\widehat{P}_s} \right)
    \left( 1 + \frac{\theta_{rd}l(x,d)}{\widehat{P}_r} \right) - 1,
      \end{split}
        \label{gene.eq_12}
    \end{equation}

      \begin{equation}
    \begin{split}
    \psi_{u}(r) &\triangleq  \int_{R^2} \left[ 1 - \frac{1}{\left( 1 + \frac{\theta_{sr}l(x,r)}{\widehat{P}_s} \right)
    }\right] dx
    = \int_{R^2} \frac{1}{\frac{\widehat{P}_s}{\theta_{sr}} ||x-r||^{\alpha} + 1}dx,
      \end{split}
        \label{gene.eq_22}
    \end{equation}

      \begin{equation}
       \begin{split}
    \varphi_{u}(r) &\triangleq  \int_{R^2} \frac{g(x)}{1 + (1-p)g(x)}dx, \quad 
          g(x) \triangleq \frac{\theta_{sr}l(x,r)}{\widehat{P}_s},
          \end{split}
        \label{gene.eq_23}
    \end{equation}

  \hrule
    \begin{equation}
      \begin{split}
        \mathbb{E}_{I^{'}}\left[ e^{-\left( \frac{\theta_{sr}}{\widehat{P}_s} I_{k, r}^{'} + \frac{\theta_{rd}}{\widehat{P}_r}I_{k, d}^{'}  \right) } \right] 
      &= \prod_{x \in \Phi \backslash \{s\}} \mathbb{E}\left[  
      e^{-\frac{\theta_{sr}l(x,r)}{\widehat{P}_s} h_{xr} \textbf{1}(x \in \Phi_k) } \right]
      \prod_{x \in \Phi \backslash \{s\}} \mathbb{E}\left[  
      e^{-\frac{\theta_{rd}l(x,d)}{\widehat{P}_r} h_{xd} \textbf{1}(x \in \Phi_k^{'}) } \right] \\
      &= \prod_{x \in \Phi \backslash \{s\}} 
          \left[ \frac{p}{1 + \frac{\theta_{sr}l(x,r)}{\widehat{P}_s}} 
          + 1 - p \right]
          \prod_{x \in \Phi \backslash \{s\}} 
      \left[ \frac{p}{1 + \frac{\theta_{rd}l(x,d)}{\widehat{P}_r}}
      + 1 - p \right],
      \end{split}
          \label{gene.eq_19}
    \end{equation}

  \hrule  
\end{figure*}
%%%%%%%%%%%%%%%%%%%%%%%%%%%%%%%%%%%%%%%%%%%%%%%%%%%%%%%%%%%%%%%%

\subsection{Results for Uncorrelated Interference}

For uncorrelated interference, the interference at the destination and the relay are generated by two different sets of nodes:
the aggregated interference at $d$ and $r$ are given by 
\begin{equation}
	\begin{split}
	I_{k, d} &= P_x \sum_{x \in \Phi \backslash \{s\}} h_{xd} l(x,d) \textbf{1}(x \in \Phi_k),\\
	I_{k, r} &= P_x \sum_{x \in \Phi \backslash \{s\}} h_{xr} l(x,r) \textbf{1}(x \in \Phi_k^{'}),
	\end{split}
	\label{gene.eq_20}
\end{equation}
where $\Phi_k$ and $\Phi_k^{'}$ are two different sets of active interfering nodes, \textit{i.e.}, $\Phi_k \neq \Phi_k^{'}$.
Then, the expectation in the last equality of (\ref{gene.eq_5}) is evaluated in (\ref{gene.eq_19})
and the corresponding packet delivery probability and the mean local delay for independent interference are derived in Theorem 3 below.

\begin{thm}
	The packet delivery probability $P(\mathcal{C})$ and the mean local delay $D(p)$ of the three node relay network 
	for uncorrelated interference are given by 
	\begin{equation}
		\begin{split}
		\resizebox{.89\hsize}{!}{$
		P(\mathcal{C}) = p \exp\left( - \lambda p \psi_{u}(r) - \lambda \pi p C(\delta) 
		\left( \frac{\theta_{rd}}{\widehat{P}_r}\right)^{\delta} 
		- B \right),
		$}
		\end{split}
		\label{gene.eq_21}
	\end{equation}
	\begin{equation}
		\resizebox{.89\hsize}{!}{$
		D(p) = \frac{1}{p} \exp\left( \lambda p \varphi_{u}(r) + \frac{\lambda \pi p C(\delta)}{(1-p)^{1-\delta}} \left( \frac{\theta_{rd}}{\widehat{P}_r}\right)^{\delta} 
		+ B \right),
		$}
		\label{gene.eq_24}
	\end{equation}
	where $\psi_u(r)$, $\varphi_{u}(r)$, and $B$ are defined in (\ref{gene.eq_22}), (\ref{gene.eq_23}), and Theorem 1, respectively.
\end{thm}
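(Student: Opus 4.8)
The plan is to mirror the derivations of Theorems 1 and 2 (Appendices I and II), but to exploit the independence of the two interfering sets $\Phi_k$ and $\Phi_k^{'}$ so that the relay and destination contributions decouple. First I would write the conditional success probability $P(\mathcal{C}_{\Phi})$. Under decode-and-forward, a packet is delivered exactly when the source accesses the slot (probability $p$) and both the $s\rightarrow r$ and $r\rightarrow d$ SINRs exceed $\theta$; with unit-mean Rayleigh fading each link succeeds with the exponential tail $\exp\!\left(-\theta_{ij}(\widehat{N}+I_{k,\cdot}^{'})/\widehat{P}_{\cdot}\right)$. Conditioning on $\Phi$ and on the interferer marks therefore factors as $p\,e^{-B}$ times $\exp\!\left(-\tfrac{\theta_{sr}}{\widehat{P}_s}I_{k,r}^{'}-\tfrac{\theta_{rd}}{\widehat{P}_r}I_{k,d}^{'}\right)$, with $B$ collecting the two noise terms. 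Averaging over the interferer fading and the ALOHA indicators is precisely the computation in (\ref{gene.eq_19}); because $\Phi_k$ and $\Phi_k^{'}$ are independent $p$-thinnings, the joint expectation splits into a product of relay factors $\frac{p}{1+\theta_{sr}l(x,r)/\widehat{P}_s}+1-p$ and an analogous product of destination factors. This split is the one structural change from the correlated case, where a single shared indicator couples the two links and forces the product form inside $\psi(r)$.

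For $P(\mathcal{C})$ I would then apply the PPP probability generating functional $\mathbb{E}\!\left[\prod_{x\in\Phi}u(x)\right]=\exp\!\left(-\lambda\int_{R^2}(1-u(x))\,dx\right)$ to each of the two products. For the relay factor $1-u(x)=p\,\frac{\theta_{sr}l(x,r)/\widehat{P}_s}{1+\theta_{sr}l(x,r)/\widehat{P}_s}$, whose integral is exactly $p\,\psi_u(r)$ by (\ref{gene.eq_22}). For the destination factor $1-u(x)=p\,\frac{\theta_{rd}\|x\|^{-\alpha}/\widehat{P}_r}{1+\theta_{rd}\|x\|^{-\alpha}/\widehat{P}_r}$, and since $d$ sits at the origin this integrand is radially symmetric and evaluates in closed form to $\pi p\,C(\delta)\,(\theta_{rd}/\widehat{P}_r)^{\delta}$ with $\delta=2/\alpha$. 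Collecting the prefactor $p$, the noise factor $e^{-B}$, and the two PGFL exponents reproduces (\ref{gene.eq_21}).

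For the mean local delay I would use that, conditioned on $\Phi$, the delay $\Delta_{\Phi}$ is geometric, so $\mathbb{E}[\Delta_{\Phi}\mid\Phi]=1/P(\mathcal{C}_{\Phi})$ and $D(p)=\frac{1}{p}e^{B}\,\mathbb{E}_{\Phi}\!\left[\prod_{x}(\text{relay factor})^{-1}\right]\mathbb{E}_{\Phi}\!\left[\prod_{x}(\text{dest.\ factor})^{-1}\right]$, where the two $\Phi$-averages again separate. A short algebraic step rewrites each factor as $\frac{1+(1-p)g(x)}{1+g(x)}$, so its reciprocal gives $1-u(x)=-\frac{p\,g(x)}{1+(1-p)g(x)}$; applying the PGFL then yields the positive exponents $\lambda p\,\varphi_u(r)$ for the relay (by (\ref{gene.eq_23})) and, after the same radial integration but with the extra $(1-p)$ in the denominator, the closed form $\frac{\lambda\pi p\,C(\delta)}{(1-p)^{1-\delta}}(\theta_{rd}/\widehat{P}_r)^{\delta}$ for the destination. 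Multiplying the pieces delivers (\ref{gene.eq_24}).

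The main obstacle I anticipate is the closed-form evaluation of the radial destination integral $\int_{R^2}\frac{c\,\|x\|^{-\alpha}}{1+b\,c\,\|x\|^{-\alpha}}\,dx=\pi\,C(\delta)\,c^{\delta}\,b^{\delta-1}$, with the accompanying convergence bookkeeping. The substitution $v=\|x\|^2$ reduces it to $\pi c\int_0^\infty(v^{\alpha/2}+bc)^{-1}\,dv$, which after rescaling becomes the standard integral $\int_0^\infty(1+w^{\alpha/2})^{-1}\,dw=\Gamma(1-\delta)\Gamma(1+\delta)=\pi\delta/\sin(\pi\delta)$, packaged into $C(\delta)$; this converges precisely because $\alpha>2$, i.e.\ $\delta<1$. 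One must also verify that the reciprocal integrand in the delay, $-p\,g/(1+(1-p)g)$, stays integrable: it is bounded near the relay and decays like $\|x-r\|^{-\alpha}$ at infinity, so the integral is finite for every $p<1$, while the factor $b^{\delta-1}=(1-p)^{\delta-1}$ correctly forces $D(p)\to\infty$ as $p\to1$. Everything else parallels Appendices I and II, so no further difficulty is expected.
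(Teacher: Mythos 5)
Your proposal is correct and follows essentially the same route as the paper's Appendix III: factorizing the conditional Laplace transform via the independent thinnings $\Phi_k$ and $\Phi_k^{'}$ as in (\ref{gene.eq_19}), applying the PGFL separately to the relay and destination products, and evaluating the destination integral in closed form via the radial substitution leading to $C(\delta)$, with the geometric-delay inversion $1/P(\mathcal{C}_{\Phi})$ handled identically. The extra convergence bookkeeping you supply is consistent with, and slightly more careful than, the paper's own treatment.
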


\begin{proof}
See Appendix III. 
\end{proof}

\begin{remark}
Without a MAC coordination, \textit{i.e.}, $p = 1$, the mean local delay becomes infinity as indicated in \normalfont{\cite{Zhong2014}}.
We observe the same result on cooperative relay networks. By using (\ref{gene.eq_13}), the mean local delay for correlated interference 
with $p = 1$ case is given by
	\begin{equation}
		\begin{split}
		D(p) &= \exp\left( \lambda \left. \varphi(r)\right\vert_{p=1} + B \right).
		\end{split}
		\label{neo.eq_1}
	\end{equation}
  $\left.\varphi(r)\right\vert_{p=1}$ is lower bounded as follows 
  \begin{equation}
    \begin{split}
      \left. \varphi(r)\right\vert_{p=1} &= 
      \int_{R^2} f(x)dx \geq \frac{\theta_{rd}}{\widehat{P}_r} \int_{R^2} l(x,d) dx\\
      &= \frac{\theta_{rd}}{\widehat{P}_r} \int_{R^2} ||x||^{-\alpha} dx = \frac{2 \pi\theta_{rd}}{\widehat{P}_r} \int_{r=0}^{\infty} r^{1-\alpha} dr,
    \label{neo.eq_2}
    \end{split}
  \end{equation}
  where we applied (\ref{gene.eq_12}) in the first equality and transformed the Cartesian coordinates to Polar coordinates, 
  \textit{i.e.}, $x \rightarrow re^{iw}$, in the last equality.
  Then, $\varphi(r)$ diverges to infinity at $p=1$. Similarly, for uncorrelated interference, the term within the exponential of 
  (\ref{gene.eq_24}) diverges to infinity. Hence, the mean local delay of the three node relay network becomes infinity at $p = 1$.
\end{remark}

%%%%%%%%%%%%%%%%%%%%%%%%%%%%%%%%%%%%%%%%%%%%%%%%%%%%%%%%%%%
\section{Optimization Strategy}
%%%%%%%%%%%%%%%%%%%%%%%%%%%%%%%%%%%%%%%%%%%%%%%%%%%%%%%%%%%

In this section, we determine the ALOHA transmission probability $p$ 
that optimizes the system measure in Section III. 
We determine the necessary and sufficient condition for optimality, 
propose an iterative method to find the optimal $p^{\ast}$ that meets both conditions, 
and compare the computational complexity of our proposed optimization strategy to that of 
the brute-force search method. 
%For tractability, 
%we use the tight upper bounds of the mean local delay to derive the optimal $p$;
%The exact mean local delay in (\ref{gene.eq_13}) and (\ref{gene.eq_24}) are expressed by an integral term 
%that jointly depends on the transmission probability $p$ and the interference, 
%which makes the computation difficult. 

\subsection{Minimizing the Delay}

We minimize the mean local delay $D(p)$ as follows
\begin{equation}
	\begin{split}
	\underset{0 \leq p \leq 1}{\text{Find}} ~ p^{\ast} ~ \text{that minimizes}~ D(p), ~\textit{i.e.,~} \frac{\partial D(p)}{\partial p} = 0.
	 \end{split}
	\label{gene.eq_30}
\end{equation}
\begin{corollary}
The optimal transmission probability $p^{\ast}$ that minimizes $D(p)$ is the solution of the following condition 
  \begin{equation}
    \begin{split}
    \frac{1}{\lambda p} = \varphi(r) + p \frac{\partial \varphi(r)}{\partial p}
    \end{split}
	    \label{gene.eq_31_a}
  \end{equation}
for correlated interference, and 
  \begin{equation}
    \begin{split}
    \resizebox{.89\hsize}{!}{$
    \frac{1}{\lambda p} = \varphi_u(r) + p \frac{\partial \varphi_u(r)}{\partial p} + 
    \frac{\pi C(\delta) \left(1-p \delta \right)}{(1-p)^{2-\delta}} \left( \frac{\theta_{rd}}{\widehat{P}_r}\right)^{\delta} 
    $}
    \end{split}
	    \label{gene.eq_31_b}
  \end{equation}
for uncorrelated interference. 
\end{corollary}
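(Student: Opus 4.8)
The plan is to apply the first-order optimality condition $\partial D(p)/\partial p = 0$ from \eqref{gene.eq_30} directly, exploiting the fact that $D(p) > 0$ on $(0,1)$ so that minimizing $D(p)$ is equivalent to minimizing $\log D(p)$. Taking logarithms converts the $\frac{1}{p}\exp(\cdot)$ structure of \eqref{gene.eq_13} and \eqref{gene.eq_24} into a sum, which linearizes the differentiation and isolates the $-\log p$ term that produces the $1/(\lambda p)$ on the left-hand side of the claimed conditions.

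For the correlated case I would write $\log D(p) = -\log p + \lambda p\,\varphi(r) + B$ and differentiate term by term. The one subtlety is that $\varphi(r)$ in \eqref{gene.eq_12} itself depends on $p$ through the factor $(1-p)f(x)$ in the denominator, so the product rule applied to $\lambda p\,\varphi(r)$ yields $\lambda\varphi(r) + \lambda p\,\partial\varphi(r)/\partial p$ rather than just $\lambda\varphi(r)$. Setting the full derivative to zero and dividing by $\lambda$ reproduces \eqref{gene.eq_31_a} at once, leaving $\partial\varphi(r)/\partial p$ as an implicit quantity.

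For the uncorrelated case the same strategy applies to $\log D(p) = -\log p + \lambda p\,\varphi_u(r) + \lambda\pi C(\delta)\,(\theta_{rd}/\widehat{P}_r)^{\delta}\, p/(1-p)^{1-\delta} + B$. The first two terms are handled exactly as before, giving $-1/p + \lambda\varphi_u(r) + \lambda p\,\partial\varphi_u(r)/\partial p$, with $\varphi_u(r)$ from \eqref{gene.eq_23} again carrying a $(1-p)$-dependence. The new ingredient is the third term, for which I would compute $\frac{d}{dp}\left[p\,(1-p)^{-(1-\delta)}\right]$ by the product rule; combining the two resulting powers over the common factor $(1-p)^{-(2-\delta)}$ and simplifying the numerator $(1-p)+p(1-\delta) = 1-p\delta$ gives the compact form $(1-p\delta)/(1-p)^{2-\delta}$. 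Substituting and dividing by $\lambda$ yields \eqref{gene.eq_31_b}.

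The routine but error-prone part is the algebraic simplification of that last derivative, in particular tracking the chain-rule sign from $d(1-p)/dp = -1$ and correctly combining the exponents $-(1-\delta)$ and $-(2-\delta)$; this is the only place a miscalculation could slip in. I would also note, to legitimize differentiating $\varphi(r)$ and $\varphi_u(r)$ under the integral sign, that the integrands in \eqref{gene.eq_12} and \eqref{gene.eq_23} are smooth in $p$ on $(0,1)$ with integrable $p$-derivatives, so the interchange of differentiation and integration is valid and $\partial\varphi(r)/\partial p$, $\partial\varphi_u(r)/\partial p$ are well defined. Since both displayed equations are stationarity conditions, they give the necessary first-order characterization of $p^{\ast}$; a short observation that $D(p)\to\infty$ as $p\to 0^{+}$ (from the $1/p$ prefactor) and as $p\to 1^{-}$ (the divergence established in the Remark above) confirms that an interior minimizer exists and therefore satisfies the stated condition.
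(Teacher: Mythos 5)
Your first-order computation is essentially the paper's own argument in logarithmic clothing: the paper differentiates $D(p)$ directly and factors out $D(p)/p$ (see (\ref{gene.eq_32})), which is identical to your $\frac{\partial}{\partial p}\log D(p)=0$ step since $D(p)>0$. Your handling of the uncorrelated case is also correct --- the derivative $\frac{d}{dp}\bigl[p(1-p)^{\delta-1}\bigr]=(1-p)^{\delta-2}\bigl[(1-p)+p(1-\delta)\bigr]=(1-p\delta)/(1-p)^{2-\delta}$ reproduces the extra term in (\ref{gene.eq_31_b}) exactly, and your remark about differentiating $\varphi(r)$, $\varphi_u(r)$ under the integral sign is a point the paper glosses over.

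The one substantive difference is how you justify that the stationary point is actually the minimizer. You argue existence of an interior minimizer from $D(p)\to\infty$ as $p\to 0^{+}$ and $p\to 1^{-}$, which proves \emph{necessity}: the minimizer satisfies (\ref{gene.eq_31_a}). The paper instead computes the second derivative at the stationary point, (\ref{gene.neoeq_01}), and uses the explicit forms in (\ref{gene.neoeq_02}) --- both manifestly non-negative integrals of powers of $f(x)\geq 0$ --- to conclude $\partial^2 D/\partial p^2\big|_{p^{\ast}}>0$. That extra step is what makes the condition \emph{sufficient} and pins down uniqueness (every critical point is a strict local minimum, so on an interval with divergent boundary values there can be only one), which is what licenses calling $p^{\ast}$ ``the solution'' of (\ref{gene.eq_31_a}) and feeding it to the Newton--Raphson search. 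Your boundary argument alone does not exclude multiple solutions of the stationarity equation. Adding the two-line observation that $\partial\varphi(r)/\partial p$ and $\partial^2\varphi(r)/\partial p^2$ are non-negative would close this gap and make your proof equivalent in strength to the paper's.
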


\begin{proof}
For brevity, we only prove the interference correlated case, however the proof for uncorrelated interference can be found using a similar approach.
By using Theorem 2, the first derivative of $D(p)$ is obtained as 
   \begin{equation}
    \begin{split}
    \frac{\partial D(p)}{\partial p} &= -\frac{D(p)}{p} \left[ 1 - \lambda p \left( \varphi(r) + p \frac{\partial \varphi(r)}{\partial p} \right)
    \right]. 
    \end{split}
	    \label{gene.eq_32}
  \end{equation}
 Since $D(p)$ has a positive value, the optimal $p^{\ast}$ achieves (\ref{gene.eq_31_a}). 

The second derivative of $D(p)$ at $p^{\ast}$ is given by
   \begin{equation}
    \begin{split}
        \resizebox{.89\hsize}{!}{$
    \frac{\partial^2 D(p)}{\partial p^2}\bigg|_{p^{\ast}} = \frac{D(p)}{p^2} 
    \left[ 1 + \lambda p^2 \left( 2 \frac{\partial \varphi(r)}{\partial p} + p \frac{\partial^2 \varphi(r)}{\partial p^2} \right)
    \right], 
    $}
    \end{split}
	    \label{gene.neoeq_01}
  \end{equation}
where we applied (\ref{gene.eq_31_a}) to simplify the expression.
The derivatives of $\varphi(r)$ are given by 
  \begin{equation}
    \begin{split}
    \frac{\partial \varphi(r)}{\partial p} &=  \int_{R^2} \frac{f(x)^2}{(1 + (1-p)f(x))^2}dx, \\
    \frac{\partial^2 \varphi(r)}{\partial p^2} &=  \int_{R^2} \frac{f(x)^3}{(1 + (1-p)f(x))^3}dx.
    \end{split}
	    \label{gene.neoeq_02}
  \end{equation}
 Since $f(x)$ is a non-negative function on $x$, 
 $\frac{\partial \varphi(r)}{\partial p}$ and $\frac{\partial^2 \varphi(r)}{\partial p^2}$ are both non-negative.
 Hence, $\frac{\partial^2 D(p)}{\partial p^2}$ in (\ref{gene.neoeq_01}) has a positive value at $p^{\ast}$ 
 and the transmission probability $p^{\ast}$ minimizes the mean local delay $D(p)$. This completes the proof. 
\end{proof}

\subsection{Joint Optimization of the Throughput and Delay}

Let us define the utility function $U(p) \triangleq \frac{p P(\mathcal{C})}{D(p)}$ as the ratio of 
the network throughput $p P(\mathcal{C})$ to the mean local delay $D(p)$. By maximizing $U$, 
we can jointly maximize the throughput and minimize the delay as follows
\begin{equation}
	\begin{split}
	\underset{0 \leq p \leq 1}{\text{Find}} ~ p^{\ast} ~ \text{that maximizes}~ U, ~\textit{i.e.,~} \frac{\partial U}{\partial p} = 0.
	 \end{split}
	\label{gene.eq_33}
\end{equation}
\begin{corollary}
The optimal probability $p^{\ast}$ that maximizes the utility $U$ is the solution of the following condition 
  \begin{equation}
    \begin{split}
    \frac{3}{\lambda p} = \psi(r) + \varphi(r) + p \frac{\partial \varphi(r)}{\partial p}
    \end{split}
	    \label{gene.neoeq_03}
  \end{equation}
for correlated interference, and 
  \begin{equation}
    \begin{split}
    \frac{3}{\lambda p} &= \psi_u(r) + \varphi_u(r) + p \frac{\partial \varphi_u(r)}{\partial p}\\
    &+ \pi C(\delta) \left( \frac{\theta_{rd}}{\widehat{P}_r}\right)^{\delta} \left[ 1 + \frac{(1-p \delta)}{(1-p)^{2-\delta}}\right]
    \end{split}
	    \label{gene.neoeq_04}
  \end{equation}
for uncorrelated interference. 
\end{corollary}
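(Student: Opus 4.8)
The plan is to exploit the multiplicative structure of the utility $U(p) = p\,P(\mathcal{C})/D(p)$ by passing to its logarithm. Since $P(\mathcal{C})$ and $D(p)$ from Theorems~1--3 are each the product of a power of $p$ with an exponential, the quotient $U(p)$ is again of this form, so $\ln U$ converts every exponential factor into an additive term and collapses the powers of $p$ into a single $3\ln p$. Because $U(p) > 0$ on $(0,1)$, the stationarity condition $\partial U/\partial p = 0$ in (\ref{gene.eq_33}) is equivalent to $\partial(\ln U)/\partial p = 0$, and I would differentiate the logarithm rather than $U$ itself to avoid carrying the exponential through the product rule.

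For the correlated case, substituting (\ref{gene.eq_10}) and (\ref{gene.eq_13}) gives $\ln U(p) = 3\ln p - \lambda p\bigl(\psi(r) + \varphi(r)\bigr) - 2B$. The one point to watch is that $\psi(r)$ in (\ref{gene.eq_11}) carries no $p$-dependence whereas $\varphi(r)$ in (\ref{gene.eq_12}) does, so differentiating produces $3/p - \lambda\bigl(\psi(r) + \varphi(r)\bigr) - \lambda p\,\partial\varphi(r)/\partial p = 0$; dividing by $\lambda$ reproduces (\ref{gene.neoeq_03}). This is the same manipulation that underlies Corollary~1, now applied to the combined throughput/delay objective.

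For the uncorrelated case I would substitute (\ref{gene.eq_21}) and (\ref{gene.eq_24}) to obtain $\ln U(p) = 3\ln p - \lambda p\bigl(\psi_u(r) + \varphi_u(r)\bigr) - \lambda p A - \lambda p A/(1-p)^{1-\delta} - 2B$, where I abbreviate $A \triangleq \pi C(\delta)(\theta_{rd}/\widehat{P}_r)^{\delta}$. The main computational obstacle---indeed the only step that is not immediate---is differentiating the term $p/(1-p)^{1-\delta}$; a quotient-rule computation followed by cancelling the common factor $(1-p)^{-\delta}$ collapses the numerator $(1-p) + p(1-\delta)$ to $1 - p\delta$, giving the compact form $(1 - p\delta)/(1-p)^{2-\delta}$. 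Collecting the remaining terms (again using that $\psi_u(r)$ is $p$-independent while $\varphi_u(r)$ is not), setting the derivative to zero, and dividing by $\lambda$ yields (\ref{gene.neoeq_04}) once $A$ is reinstated and the two contributions $\lambda A$ and $\lambda A(1 - p\delta)/(1-p)^{2-\delta}$ are grouped into the bracketed factor $1 + (1 - p\delta)/(1-p)^{2-\delta}$.

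Finally, I would note that the stated equations are first-order stationarity conditions; to confirm that a solution is genuinely a maximizer of $U$ one can compute the second derivative of $\ln U$, exactly as was done for $D(p)$ in Corollary~1, and check its sign using the non-negativity of $\partial\varphi/\partial p$ and $\partial^2\varphi/\partial p^2$ recorded in (\ref{gene.neoeq_02}).
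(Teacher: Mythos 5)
Your proposal is correct and takes essentially the same route as the paper: differentiating $U$ (you via $\ln U$, the paper directly, which is the same computation since the paper's factor $U(p)/p$ times the bracket is exactly $U$ times $\partial(\ln U)/\partial p$), obtaining the identical first-order conditions, and confirming maximality through the sign of the second derivative using the non-negativity of $\partial\varphi/\partial p$ and $\partial^2\varphi/\partial p^2$. Your explicit differentiation of $p/(1-p)^{1-\delta}$ for the uncorrelated case correctly fills in a step the paper leaves to the reader.
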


\begin{proof}
Again, we only prove the interference correlated case, since the uncorrelated interference case can be obtained in a similar manner.
By using Theorem 1 and 2, the first derivative of $U$ can be expressed as follows
   \begin{equation}
    \begin{split}
    \resizebox{.89\hsize}{!}{$		
    \frac{\partial U(p)}{\partial p} = \frac{U(p)}{p} \left[ 3 - 
    \lambda p \left( \psi(r) + \varphi(r) + p \frac{\partial \varphi(r)}{\partial p} \right)
    \right].
    $}
    \end{split}
	    \label{gene.eq_36}
  \end{equation}
 Since $U(p)$ has a positive value for $p > 0$, the optimal $p^{\ast}$ achieves (\ref{gene.neoeq_03}).
The second derivative of $U(p)$ at $p^{\ast}$ is given by
   \begin{equation}
    \begin{split}
        \resizebox{.89\hsize}{!}{$
    \frac{\partial^2 U(p)}{\partial p^2}\bigg|_{p^{\ast}} = -\frac{U(p)}{p^2} 
    \left[ 3 + \lambda p^2 \left( 2 \frac{\partial \varphi(r)}{\partial p} + p \frac{\partial^2 \varphi(r)}{\partial p^2} \right)
    \right], 
    $}
    \end{split}
	    \label{gene.neoeq_05}
  \end{equation}
where we applied (\ref{gene.neoeq_03}) to simplify the expression. 
Since $\frac{\partial \varphi(r)}{\partial p}$ and $\frac{\partial^2 \varphi(r)}{\partial p^2}$ are both non-negative, 
$\frac{\partial^2 U(p)}{\partial p^2}$ in (\ref{gene.neoeq_05}) has a negative value at $p^{\ast}$ 
 and the transmission probability $p^{\ast}$ maximizes the utility $U(p)$. This completes the proof. 
 \end{proof}

\subsection{Proposed Optimization Strategy}

\begin{algorithm}
  \caption{Find $p^{\ast}$ that achieves (\ref{gene.eq_31_a}), (\ref{gene.eq_31_b}), (\ref{gene.neoeq_03}), (\ref{gene.neoeq_04}).}
  \begin{algorithmic}[1]
    \Require $0 < p_0 \leq 1, \epsilon_0 > 0$
		\Procedure{Newton-Raphson Iteration}{}
		 \State $p \gets p_0$
      \While{$|\Lambda(p)| > \epsilon_0$}
        \State $p \gets p - \frac{\Lambda(p)}{\Lambda^{'}(p)}$
      \EndWhile\label{euclidendwhile}
      \State \textbf{return} $p^{\ast} \gets p$
    \EndProcedure
  \end{algorithmic}
	\label{opt.strategy}
\end{algorithm}

The optimal conditions in Corollary 1 and 2 are implicit functions of $p$: the effect of the interference and the ALOHA transmission probability $p$ are coupled inside the integral term of $\varphi(r)$ and $\varphi_u(r)$, which makes it hard to explicitly calculate $p^{\ast}$ from (\ref{gene.eq_31_a}), (\ref{gene.eq_31_b}), (\ref{gene.neoeq_03}), and (\ref{gene.neoeq_04}). Hence, we adopt an iterative search method to numerically find the optimal 
$p^{\ast}$. Specifically, we use Newton-Raphson iteration as outlined in Algorithm 1 \cite{MTheath2002}. 
First, we denote $\Lambda(p)$ for (\ref{gene.neoeq_03}) as 
   \begin{equation}
    \begin{split}
    \Lambda(p) = \frac{3}{\lambda p} - \psi(r) - \varphi(r) - p \frac{\partial \varphi(r)}{\partial p}.
    \end{split}
	    \label{gene.neoeq_06}
  \end{equation}
 Similarly, $\Lambda(p)$ for (\ref{gene.eq_31_a}), (\ref{gene.eq_31_b}), 
  and (\ref{gene.neoeq_04}) are defined by subtracting the terms on the left-hand side to that on the right-hand side. 
  The derivative of $\Lambda(p)$ for (\ref{gene.neoeq_03}) is given by 
   \begin{equation}
    \begin{split}
    \Lambda^{'}(p) = -\frac{3}{\lambda p^2} - 2 \frac{\partial \varphi(r)}{\partial p} - p\frac{\partial^2 \varphi(r)}{\partial p^2},
    \end{split}
	    \label{gene.neoeq_07}
  \end{equation}
where the derivatives of $\varphi(r)$ are derived in (\ref{gene.neoeq_02}).

The iterative search begins by setting the initial transmission probability $0 < p_0 \leq 1$ and the termination criteria $\epsilon_0 > 0$. 
The ALOHA transmission probability is updated as  
   \begin{equation}
    \begin{split}
    	p_{m+1} = p_m - \frac{\Lambda(p_m)}{\Lambda^{'}(p_m)}, \quad m:\text{ iteration index}, 
    \end{split}
	    \label{gene.neoeq_08}
  \end{equation}
until $\Lambda(p_m)$ becomes arbitrarily close to zero. If the transmission probability $p_m$ satisfies the condition 
$|\Lambda(p_m)| \leq \epsilon_0$, the iterative search stops and returns $p^{\ast} = p_m$ as the optimal transmission probability.

Conventional approaches \cite{MTheath2002} for numerically finding $p^{\ast}$ include
the brute-force method where the utility $U(p)$ is evaluated for all possible $0 < p \leq 1$ and compared to find the optimal value 
and the quick-sort method where each adjacent numbers $U(p_1)$ and $U(p_2)$ are compared and sorted until the maximum value is reached. 
The ratio between the computational complexity of the Newton-Raphson iteration and that of the brute-force method is 
${\mathcal{O}(\log{}n)}/{\mathcal{O}(n^2)}$ for $n$ digit precision, whereas the ratio between the computational complexity of the Newton-Raphson iteration and that of the quick-sort method is ${\mathcal{O}(\log{}n)}/{\mathcal{O}(n\log{}n)}$ \cite{MTheath2002}. Hence, the optimization 
strategy proposed here is computationally efficient and effective compared to the conventional schemes, such as brute-force method or quick-sort method.

%%%%%%%%%%%%%%%%%%%%%%%%%%%%%%%%%%%%%%%%%%%%%%%%%%%%%%%%%%%
\section{Numerical Results}
%%%%%%%%%%%%%%%%%%%%%%%%%%%%%%%%%%%%%%%%%%%%%%%%%%%%%%%%%%%

In this section, we present numerical examples through which we compare the performance of the interference correlated and uncorrelated cases. 
\figref{fig1} shows the success probability, mean local delay, and utility calculated using 
(\ref{gene.eq_10}), (\ref{gene.eq_13}), (\ref{gene.eq_21}), (\ref{gene.eq_24}), and $U(p) = \frac{p P(\mathcal{C})}{D(p)}$
and evaluated over different relay locations. We use a three node relay network where the nodes are located at $s = (2, 0)$, $d = (0, 0)$, 
$r = (r_x, 0)$. The interfering nodes are randomly deployed by using PPP with node density $\lambda$. 
We assume that the nodes transmit with power $P_s = 5~\text{dB}$, $P_r = 5~\text{dB}$, $P_x = 5~\text{dB}$, the noise power is $N_0 = 1$, 
the path loss exponent is $\alpha = 4$, and the SINR threshold is $\theta = 1$. 
The dotted and solid curves are the numerical results for the interference correlated (IC) and interference uncorrelated (IU) cases, respectively. 
It is clear that the IC achieves a better performance than the IU in terms of success probability, mean local delay, and utility. 
For a small interfering node density $\lambda$, the performance gap between the IC and IU cases is very small, however, 
as the node density increases, the gap becomes more evident. 
We also note that for a small node density, using a high transmission probability achieves a better performance 
in terms of the success probability, mean local delay, and utility. For a large node density, more nodes will interfere with each other, 
increasing the mean local delay and decreasing the success probability. So, it is optimal to use a low transmission probability for large node density case. 

\figref{fig2} shows the success probability versus node density, transmission probability, and SINR threshold 
where we fixed the relay location at the middle, \textit{i.e.}, $r = (1, 0)$. 
The dotted and solid curves are again the numerical results for the IC and IU cases with a fixed transmission probability $p$.
For the curves without a line and only the markers, we determined the optimal $p^{\ast}$ that maximizes the utility $U(p)$ using Corollary 2 for each node density $\lambda$ and SINR threshold $\theta$. Then, we evaluated the corresponding success probability for the given $(p^{\ast}, \lambda, \theta)$. It is clear that the IC case achieves better performance than the IU case and 
the performance gap increases as the node density increases. By using similar method that was used in Theorem 1 and 3, 
the link success probability can be derived as \cite{Haenggi2013}
   \begin{equation}
    \begin{split}
    	P\left[ \gamma_{k, rd} > \theta \right] &= \exp\left(-\frac{\widehat{N} \theta_{rd}}{\widehat{P}_r} 
        - \lambda \pi p C(\delta) \left( \frac{\theta_{rd}}{\widehat{P}_r}\right)^{\delta}  \right), \\
		P\left[ \gamma_{k, sr} > \theta \right] &= \exp\left(-\frac{\widehat{N} \theta_{sr}}{\widehat{P}_s} 
        - \lambda \pi  \psi_{u}(r)  \right),
    \end{split}
	    \label{neoeq.eq_001}
  \end{equation}
which are decreasing function of the SINR threshold $\theta$. So, each link falls in outage with a high probability
for large $\theta$ and the gap between IC and IU becomes increasingly wider as the node density increases. 
Since we jointly optimized the network throughput $p P(\mathcal{C})$ and the mean local delay $D(p)$, 
the optimized $p$ do not achieve the maximum success probability for all cases. However, 
our proposed optimization strategy still achieves a significant performance gain, compared to the conventional fixed transmission probability case. 

Similarly, \figref{fig3} shows the mean local delay versus node density, transmission probability, and SINR threshold 
where the relay location was fixed at the middle, \textit{i.e.}, $r = (1, 0)$. 
The IC case achieves a lower mean local delay than the IU case in most of the scenarios and the performance gap increases for a large node density. 
Fig. 4(b) illustrates the convexity of the mean local delay, \textit{i.e.}, as the node density increases, 
the transmission probability that minimizes the mean local delay decreases. 
Again, although the optimized $p$ determined by Corollary 2 does not achieve the minimum mean local delay for all cases, 
it still achieves a significant performance gain compared to the conventional fixed transmission probability case.

%%%%%%%%%%%%%%%%%%%%%%%%%%%%%%%%%%%%%%%%%%%%%%%%%%%%%%%%%%%
\section{Conclusion}
%%%%%%%%%%%%%%%%%%%%%%%%%%%%%%%%%%%%%%%%%%%%%%%%%%%%%%%%%%%

In this paper, we have considered a three node relay network subject to interference generated by nodes which were distributed according to a PPP. 
Furthermore, we evaluated the packet delivery probability and the mean local delay for both interference correlated and uncorrelated cases. 
Based on these analytical derivations, we determined the necessary and sufficient condition to jointly maximize the packet delivery probability and minimize the mean local delay. The transmission probability $p$ that achieves this optimality was also found. 
Specifically, we observed that correlated interference achieves better performance than uncorrelated interference in terms of success probability, mean local delay, and utility. For a small node density, the performance gap between IC and IU is very small, however, as the node density increases, the gap becomes more evident. Also, for a small node density, using a high transmission probability $p$ achieves better performance. 
However for a large node density the opposite is true, \textit{i.e.}, a low transmission probability should be used. 
Based on these observations, we have proposed a computationally efficient optimization strategy that finds the optimal $p$ using an iterative search. Finally, we have also provided numerical results to prove the performance gain that our proposed optimization strategy achieves compared to a conventional scheme.

% %%%%%%%%%%%%%%%%%%%%%%%%%%%%%%%%%%%%%%%%%%%%%%%%%%%%%%%%%%%
% \section*{Acknowledgment}
% %%%%%%%%%%%%%%%%%%%%%%%%%%%%%%%%%%%%%%%%%%%%%%%%%%%%%%%%%%%

% This work was supported by the Engineering and Physical Sciences Research Council (EPSRC) under Grant References EP/H044191/1 and EP/L026074/1.

% This paper was made possible by NPRP grant 4-1119-2-427 from the Qatar National Research Fund (a member of Qatar Foundation). The statements made herein are solely the responsibility of the authors.

%%%%%%%%%%%%%%%%%%%%%%%%%%%%%%%%%%%%%%%%%%%%%%%%%%%%%%%%%%%
\section*{Appendix I}
%%%%%%%%%%%%%%%%%%%%%%%%%%%%%%%%%%%%%%%%%%%%%%%%%%%%%%%%%%%

In this appendix, we prove Theorem $1$. 
	Let $\mathcal{C}_{\Phi}$ denote the successful transmission event between $s \rightarrow r \rightarrow d$ conditioned upon the PPP $\Phi$. 
	Using (\ref{gene.eq_3}) and a pre-defined SINR threshold $\theta$, $\mathcal{C}_{\Phi}$ can be written as 
	\begin{equation}
		\begin{split}
			\mathcal{C}_{\Phi} &\triangleq \{ \gamma_{k, sr} > \theta ~\cap~ \gamma_{k, rd} > \theta  \},
		\end{split}
		\label{gene.eq_4}
	\end{equation}
	and the conditional success probability given $\Phi$ is 
	\begin{equation}
		\begin{split}
			P(\mathcal{C}_{\Phi}) &= p P\left[ \gamma_{k, sr} > \theta \cap \gamma_{k, rd} > \theta | \Phi \right]\\
			&=pP\left[ \widehat{P}_s h_{sr} > \theta_{sr} ( \widehat{N} + I_{k, r}^{'} ) \right. \\
			&\left. \qquad \cap ~
			\widehat{P}_r h_{rd} > \theta_{rd} ( \widehat{N} + I_{k, d}^{'} ) | \Phi \right] \\
			&= p\mathbb{E}_{I^{'}}\left[ e^{-\frac{\theta_{sr}}{\widehat{P}_s} ( \widehat{N} + I_{k, r}^{'} ) }
			e^{-\frac{\theta_{rd}}{\widehat{P}_r} ( \widehat{N} + I_{k, d}^{'} ) }
			\right] \\
			&= p e^{-\widehat{N} \left( \frac{\theta_{sr}}{\widehat{P}_s} + \frac{\theta_{rd}}{\widehat{P}_r} \right)}
			\mathbb{E}_{I^{'}}\left[ e^{-\frac{\theta_{sr}}{\widehat{P}_s} I_{k, r}^{'}-\frac{\theta_{rd}}{\widehat{P}_r}I_{k, d}^{'} }
			\right],
		\end{split}
		\label{gene.eq_5}
	\end{equation}
	where the first equality is due to the ALOHA transmission probability $p$ 
	and the distribution of $h_{ij}$, \textit{i.e.}, $P(h_{ij} > x) = \exp(-x)$, is applied to the third equality.

	For correlated interference\footnote{
	In Section III-A and B, we only consider correlated interference. The results for uncorrelated inference are summarized in Section III-C.
	}, the interference is generated from the same set $\Phi_k$. By substituting (\ref{gene.eq_1}) and (\ref{gene.eq_2}) 
	in the last equality of (\ref{gene.eq_5}) and averaging it with respect to the fading coefficient $h_{xi}$ and the ALOHA protocol $\Phi_k$, 
	the expectation in the last term of (\ref{gene.eq_5}) can be expressed as 
	\begin{equation}
		\begin{split}
			&\mathbb{E}_{I^{'}}\left[ e^{-\left( \frac{\theta_{sr}}{\widehat{P}_s} I_{k, r}^{'} + \frac{\theta_{rd}}{\widehat{P}_r}I_{k, d}^{'}  \right) } \right] \\
			= &\prod_{x \in \Phi \backslash \{s\}} \mathbb{E}\left[  
			e^{-\left( \frac{\theta_{sr}l(x,r)}{\widehat{P}_s} h_{xr} + \frac{\theta_{rd}l(x,d)}{\widehat{P}_r}h_{xd} \right)\textbf{1}(x \in \Phi_k) } \right] \\
			= &\prod_{x \in \Phi \backslash \{s\}} 
			\left[ \frac{p}{1 + \frac{\theta_{sr}l(x,r)}{\widehat{P}_s}} \frac{1}{1 + \frac{\theta_{rd}l(x,d)}{\widehat{P}_r}}
			+ 1 - p \right].
		\end{split}
		\label{gene.eq_6}
	\end{equation}

	Using (\ref{gene.eq_5}) and (\ref{gene.eq_6}), the packet delivery probability, \textit{i.e.}, the average success probability over all possible $\Phi$, can now be expressed as follows
	\begin{equation}
		\begin{split}
			P(\mathcal{C}) &= \mathbb{E}\left[ P(\mathcal{C}_{\Phi}) \right]\\
			&= p e^{-\widehat{N} \left( \frac{\theta_{sr}}{\widehat{P}_s} + 
			\frac{\theta_{rd}}{\widehat{P}_r} \right)} \mathbb{E}\left[ \prod_{x \in \Phi \backslash \{s\}} \upsilon(x) \right],
		\end{split}
		\label{gene.eq_7}
	\end{equation}
	where 
	\begin{equation}
		\begin{split}
			\upsilon(x) &= \frac{p}{1 + \frac{\theta_{sr}l(x,r)}{\widehat{P}_s}} \frac{1}{1 + \frac{\theta_{rd}l(x,d)}{\widehat{P}_r}} + 1 - p.
		\end{split}
	\label{gene.eq_8}
	\end{equation}
	The expectation term in (\ref{gene.eq_7}) is referred to as the probability generating functional (PGFL) 
	and can be evaluated for PPP as \cite{Haenggi2013}
	\begin{equation}
		\begin{split}
			G[\upsilon(x)] &= \mathbb{E}\left[ \prod_{x \in \Phi} \upsilon(x) \right]\\ 
			&= \exp\left( - \lambda \int_{R^2} [ 1 - \upsilon(x)] dx\right),
		\end{split}
		\label{gene.eq_9}
	\end{equation}
	where the integral in (\ref{gene.eq_9}) is a two dimensional integral since $x$ is a coordinate.
	Hence, we obtain the packet delivery probability in (\ref{gene.eq_10}) by substituting (\ref{gene.eq_8}) and (\ref{gene.eq_9}) into (\ref{gene.eq_7}) and 
	using the notation $\psi(r)$ in (\ref{gene.eq_11}). This completes the proof.

%%%%%%%%%%%%%%%%%%%%%%%%%%%%%%%%%%%%%%%%%%%%%%%%%%%%%%%%%%%
\section*{Appendix II}
%%%%%%%%%%%%%%%%%%%%%%%%%%%%%%%%%%%%%%%%%%%%%%%%%%%%%%%%%%%

In this appendix, we prove Theorem $2$.
	The mean of the geometric random variable in (\ref{gene.eq_10_1}) is  $\mathbb{E}\left[ \Delta_{\Phi} \right] = 1/P(\mathcal{C}_{\Phi})$. 
	Then, the mean local delay for all available $\Phi$ is given by
	\begin{equation}
		\begin{split}
		D(p) &\triangleq \mathbb{E}\left[ \mathbb{E}\left( \Delta_{\Phi} \right) \right] = \mathbb{E}\left[ \frac{1}{P\left( \mathcal{C}_{\Phi} \right)} \right].
		\end{split}
		\label{gene.eq_14}
	\end{equation}
	By substituting (\ref{gene.eq_5}) and (\ref{gene.eq_6}) into the last equality in (\ref{gene.eq_14}), applying the PGFL of PPP (\ref{gene.eq_9}), the mean local delay can be expressed as follows
	\begin{equation}
		\begin{split}
		D(p) &= \frac{e^{B}}{p} \mathbb{E}\left[ \prod_{x \in \Phi \backslash \{s\}} \frac{1}{\upsilon(x)} \right] \\
		&= \frac{1}{p} \exp\left( B - \lambda \int_{R^2} \left[ 1 - \frac{1}{\upsilon(x)}\right] dx \right).
		\end{split}
		\label{gene.eq_14_b}
	\end{equation}
	Since the integral in (\ref{gene.eq_14_b}) can be simplified as 
	\begin{equation}
		\begin{split}
		-p\varphi(r) = \int_{R^2} \left[ 1 - \frac{1}{\upsilon(x)}\right] dx, 
		\end{split}
		\label{gene.eq_14_c}
	\end{equation}
	we obtain the mean local delay in (\ref{gene.eq_13}) by substituting (\ref{gene.eq_14_c}) into (\ref{gene.eq_14_b}). This completes the proof.

%%%%%%%%%%%%%%%%%%%%%%%%%%%%%%%%%%%%%%%%%%%%%%%%%%%%%%%%%%%
\section*{Appendix III}
%%%%%%%%%%%%%%%%%%%%%%%%%%%%%%%%%%%%%%%%%%%%%%%%%%%%%%%%%%%

In this appendix, we prove Theorem $3$.
	The conditional success probability given $\Phi$ is 
	\begin{equation}
		\begin{split}
		P(\mathcal{C}_{\Phi}) &= p e^{-B}
		\mathbb{E}_{I^{'}}\left[ e^{-\frac{\theta_{sr}}{\widehat{P}_s} I_{k, r}^{'}-\frac{\theta_{rd}}{\widehat{P}_r}I_{k, d}^{'} } \right] \\
		&= p e^{-B}
		\prod_{x \in \Phi \backslash \{s\}} \mu_1(x) \prod_{x \in \Phi \backslash \{s\}} \mu_2(x),
		\end{split}
		\label{gene.eq_25}
	\end{equation}
	where the first equality follows by (\ref{gene.eq_5}),  the second equality follows by (\ref{gene.eq_19}), 
	$\mu_1(x)$ and $\mu_2(x)$ are denoted by
	\begin{equation}
		\begin{split}
		\mu_1(x) &\triangleq \frac{p}{1 + \frac{\theta_{sr}l(x,r)}{\widehat{P}_s}} + 1 - p, \\
		\mu_2(x) &\triangleq \frac{p}{1 + \frac{\theta_{rd}l(x,d)}{\widehat{P}_r}} + 1 - p.
		\end{split}
		\label{gene.eq_26}
	\end{equation}
	By averaging (\ref{gene.eq_25}) over all possible $\Phi$, the packet delivery probability for uncorrelated interference can be expressed as
	\begin{equation}
		\begin{split}
		P(\mathcal{C}) &= p e^{-B} \mathbb{E}\left[ \prod_{x \in \Phi \backslash \{s\}} \mu_1(x) \right] 
		\mathbb{E}\left[ \prod_{x \in \Phi \backslash \{s\}} \mu_2(x) \right]\\
		&= p \exp\left( -B - \lambda \int_{R^2} [ 1 - \mu_1(x)] dx \right)\\
		&\times \exp\left( - \lambda \int_{R^2} [ 1 - \mu_2(x)] dx \right),
		\end{split}
		\label{gene.eq_27}
	\end{equation}
	where the first equality follows by (\ref{gene.eq_7}) and the PGFL of PPP (\ref{gene.eq_9}) is used for the second equality.  
	The first integral term in (\ref{gene.eq_27}) can be represented as $\int_{R^2} [ 1 - \mu_1(x)] dx = p \psi_{u}(r)$ by using straight forward 
	calculus and the second integral term can be expressed in closed form as follows 
	\begin{equation}
		\begin{split}
		&\int_{R^2} [ 1 - \mu_2(x)] dx = p \int_{R^2} \frac{1}{1 + \frac{\widehat{P}_r}{\theta_{rd}} ||x||^{\alpha}}dx\\
		&= 2 \pi p \int_{\rho = 0}^{\infty} \frac{\rho}{1 + \frac{\widehat{P}_r}{\theta_{rd}} \rho^{\alpha}}d\rho = 
		\pi p C(\delta) \left( \frac{\theta_{rd}}{\widehat{P}_r}\right)^{\delta}, 
		\end{split}
		\label{gene.eq_28}
	\end{equation}
	where we applied the notation of $\mu_2(x)$ in (\ref{gene.eq_26}) to the first equality, 
	transformed the Cartesian coordinates to Polar coordinates, \textit{i.e.}, $x \rightarrow \rho e^{i\omega}$, in the second equality, and 
	applied the following integration equality in the last inequality \cite{Gradshteyn2007}
	\begin{equation}
		\begin{split}
		\int_{0}^{\infty} \frac{x^{\mu - 1}}{1 + q x^{\nu}}dx = \frac{1}{\mu} q^{-\frac{\mu}{\nu}} C\left(\frac{\mu}{\nu}\right),
		\quad C(\delta) = \frac{1}{\mathrm{sinc}(\delta)}.
		\end{split}
		\label{gene.eq_29}
	\end{equation}
	(\ref{gene.eq_21}) follows by substituting (\ref{gene.eq_28}) and (\ref{gene.eq_22}) into (\ref{gene.eq_27}). 

	Similarly, the mean local delay can be derived as follows
	\begin{equation}
		\begin{split}
		D(p) &= \frac{e^{B}}{p} \mathbb{E}\left[ \prod_{x \in \Phi \backslash \{s\}} \frac{1}{\mu_1(x)} \right] 
		\mathbb{E}\left[ \prod_{x \in \Phi \backslash \{s\}} \frac{1}{\mu_2(x)} \right] \\
		&= \frac{1}{p} \exp\left( B - \lambda \int_{R^2} \left[ 1 - \frac{1}{\mu_1(x)}\right] dx \right)\\
		&\times \exp\left( - \lambda \int_{R^2} \left[ 1 - \frac{1}{\mu_2(x)}\right] dx \right),
		\end{split}
		\label{gene.eq_37}
	\end{equation}
	where the two integral terms in (\ref{gene.eq_37}) can be evaluated as 
	\begin{equation}
		\begin{split}
		&\int_{R^2} \left[ 1 - \frac{1}{\mu_1(x)} \right] dx = -p \varphi_{u}(r),\\
		&\int_{R^2} \left[ 1 - \frac{1}{\mu_2(x)} \right] dx = -\frac{\pi p C(\delta)}{(1-p)^{1-\delta}} \left( \frac{\theta_{rd}}{\widehat{P}_r}\right)^{\delta},
		\end{split}
		\label{gene.eq_38}
	\end{equation}
	by using the same integration technique in (\ref{gene.eq_28}) and (\ref{gene.eq_29}). 
	Hence, we obtain the mean local delay in (\ref{gene.eq_24}) by substituting (\ref{gene.eq_38}) and (\ref{gene.eq_23})
	into (\ref{gene.eq_37}). This completes the proof.

%%%%%%%%%%%%%%%%%%%%%%%%%%%%%%%%%%%%%%%%%%%%%%%%%%%%%%%%%%%%%%%%%%%%%%%%%%%%%%%%%%%%%
% \nocite{*}
 \bibliographystyle{IEEEtran}
 \bibliography{bib1}
%%%%%%%%%%%%%%%%%%%%%%%%%%%%%%%%%%%%%%%%%%%%%%%%%%%%%%%%%%%%%%%%%%%%%%%%%%%%%%%%%%%%%

\begin{figure*}%
	\centering
		\begin{subfigure}{.65\columnwidth}
		\includegraphics[width=\textwidth, height=\textwidth]{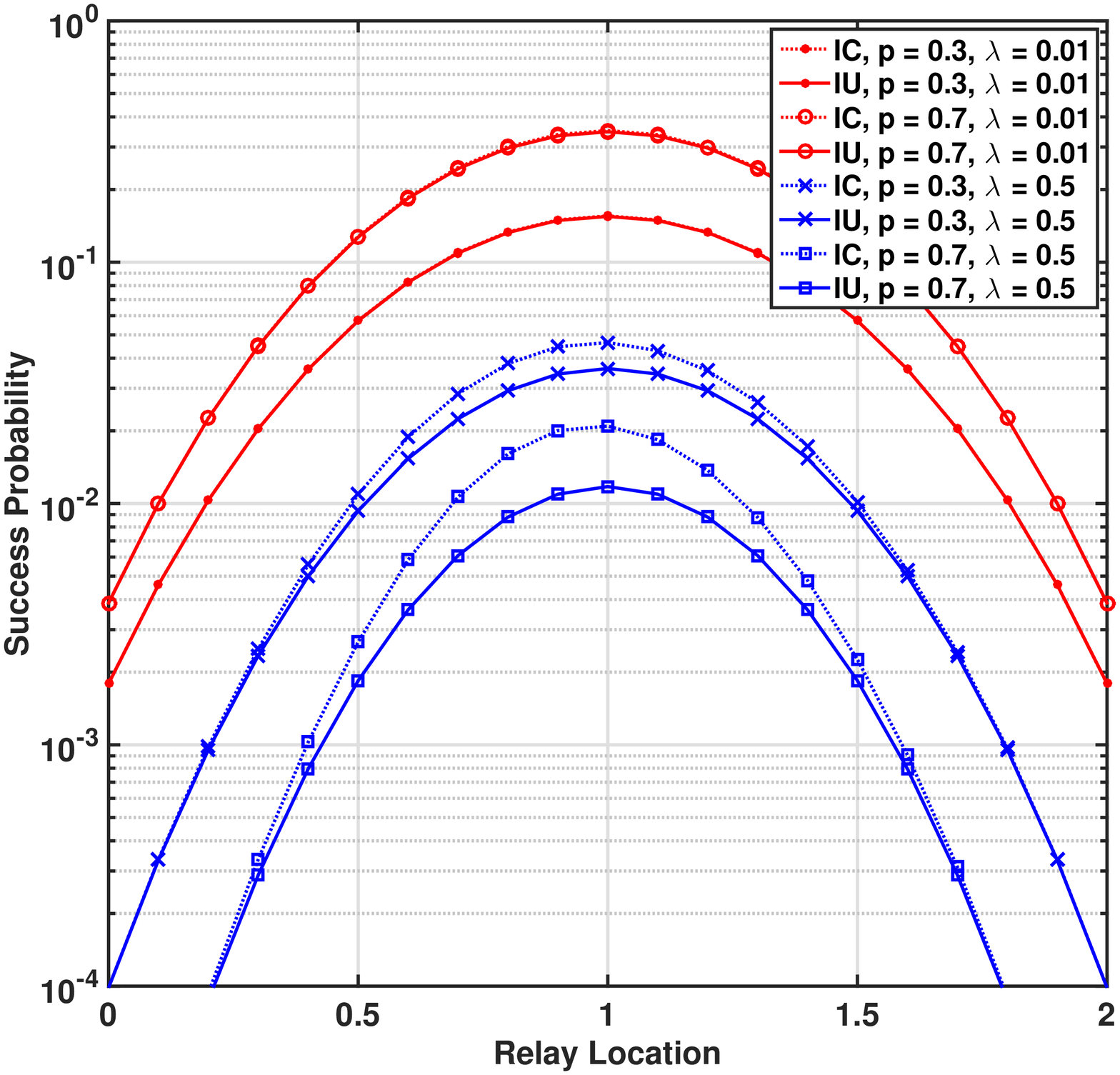}%
		\caption{}%
		\label{subfig-1a}%
		\end{subfigure}\hspace{0.1cm}%
			\begin{subfigure}{.65\columnwidth}
			\includegraphics[width=\textwidth, height=\textwidth]{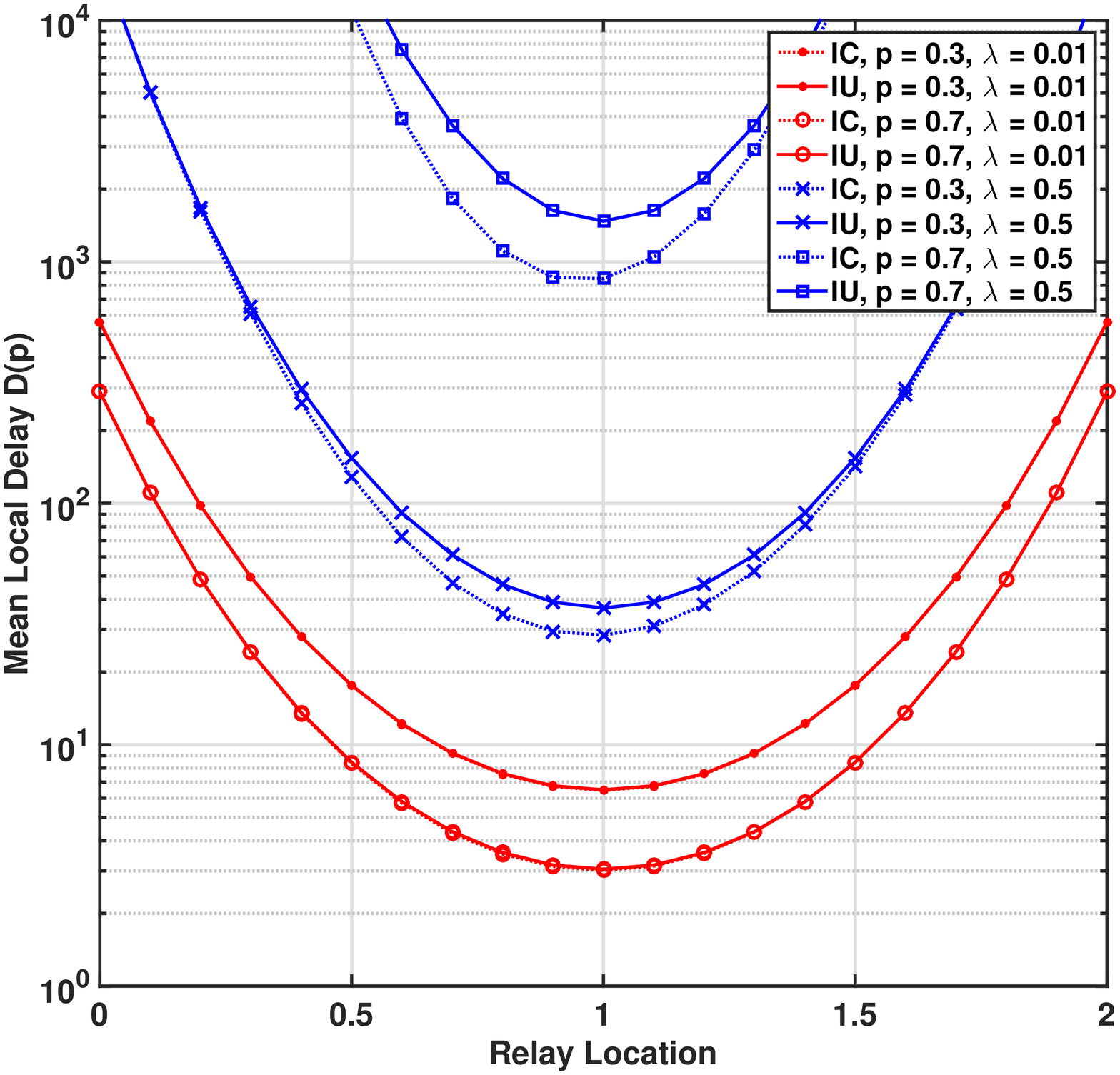}%
			\caption{}%
			\label{subfig-1b}%
			\end{subfigure}\hspace{0.1cm}%
		\begin{subfigure}{.65\columnwidth}
		\includegraphics[width=\textwidth, height=\textwidth]{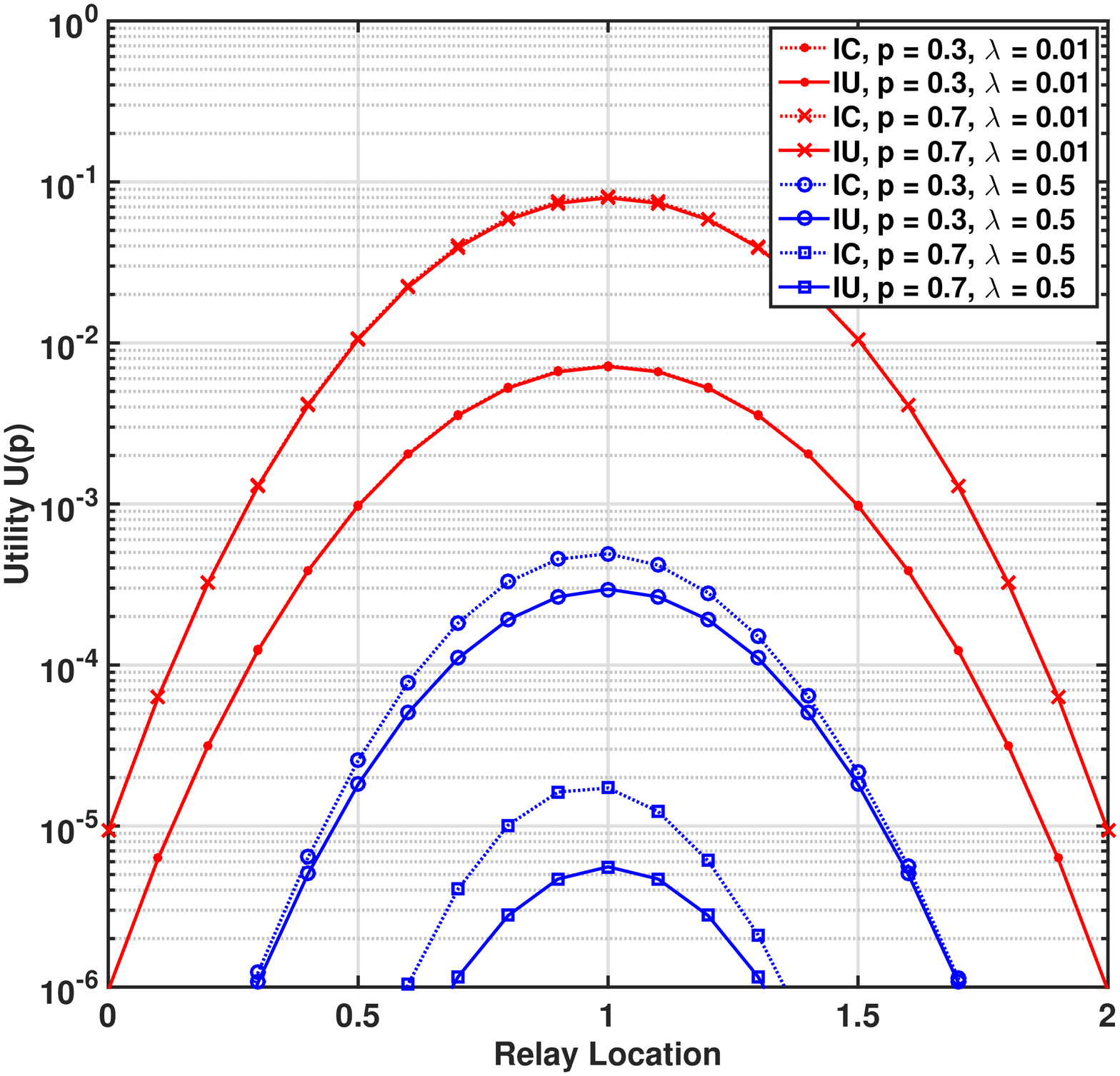}%
		\caption{}%
		\label{subfig-1c}%
		\end{subfigure}%
	\caption{(a) Success probability, (b) mean local delay, and (c) Utility versus relay location.}
	\label{fig1}
\end{figure*}

\begin{figure*}%
	\centering
		\begin{subfigure}{.65\columnwidth}
		\includegraphics[width=\textwidth, height=\textwidth]{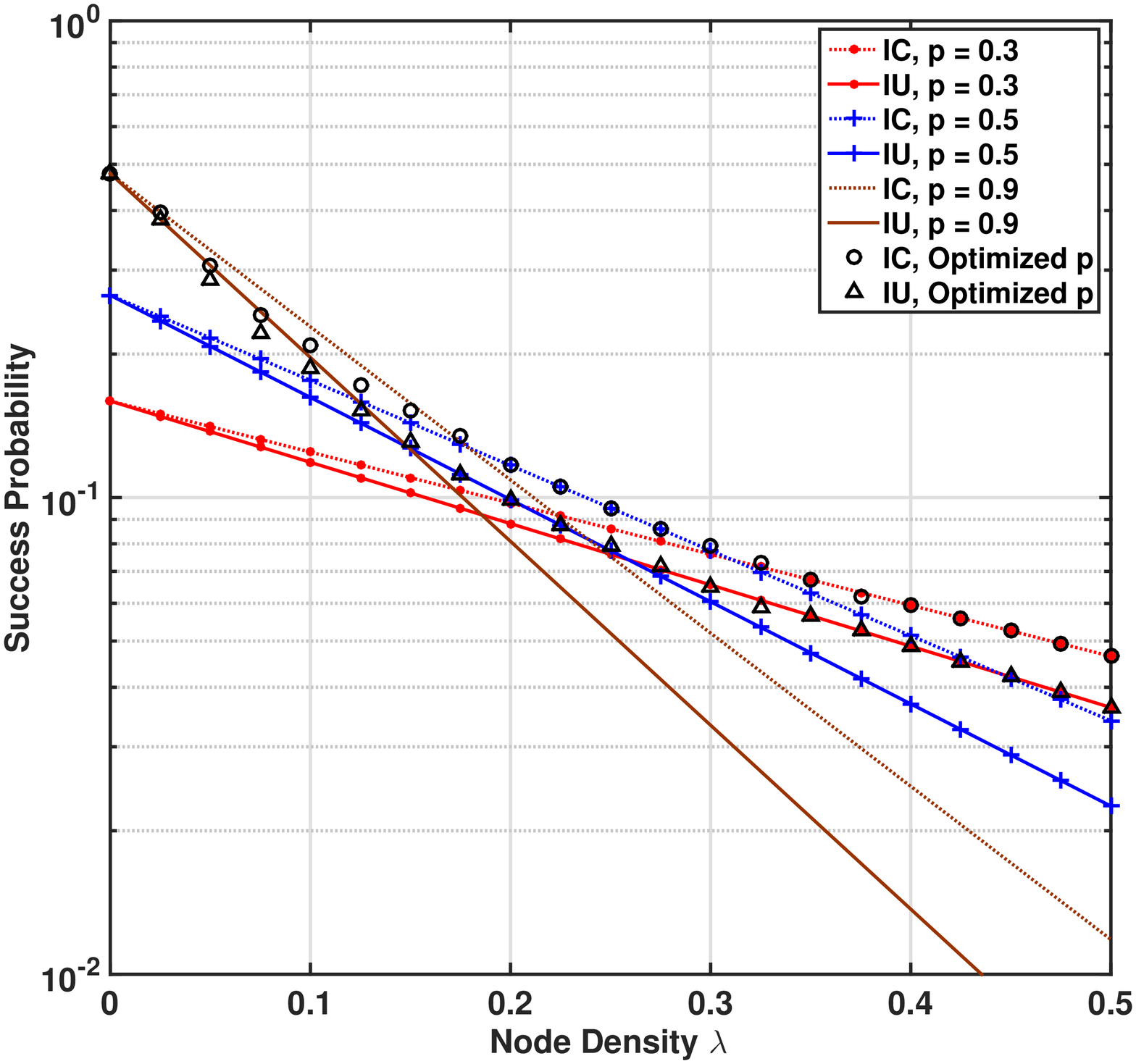}%
		\caption{}%
		\label{subfig-2a}%
		\end{subfigure}\hspace{0.1cm}%
			\begin{subfigure}{.65\columnwidth}
			\includegraphics[width=\textwidth, height=\textwidth]{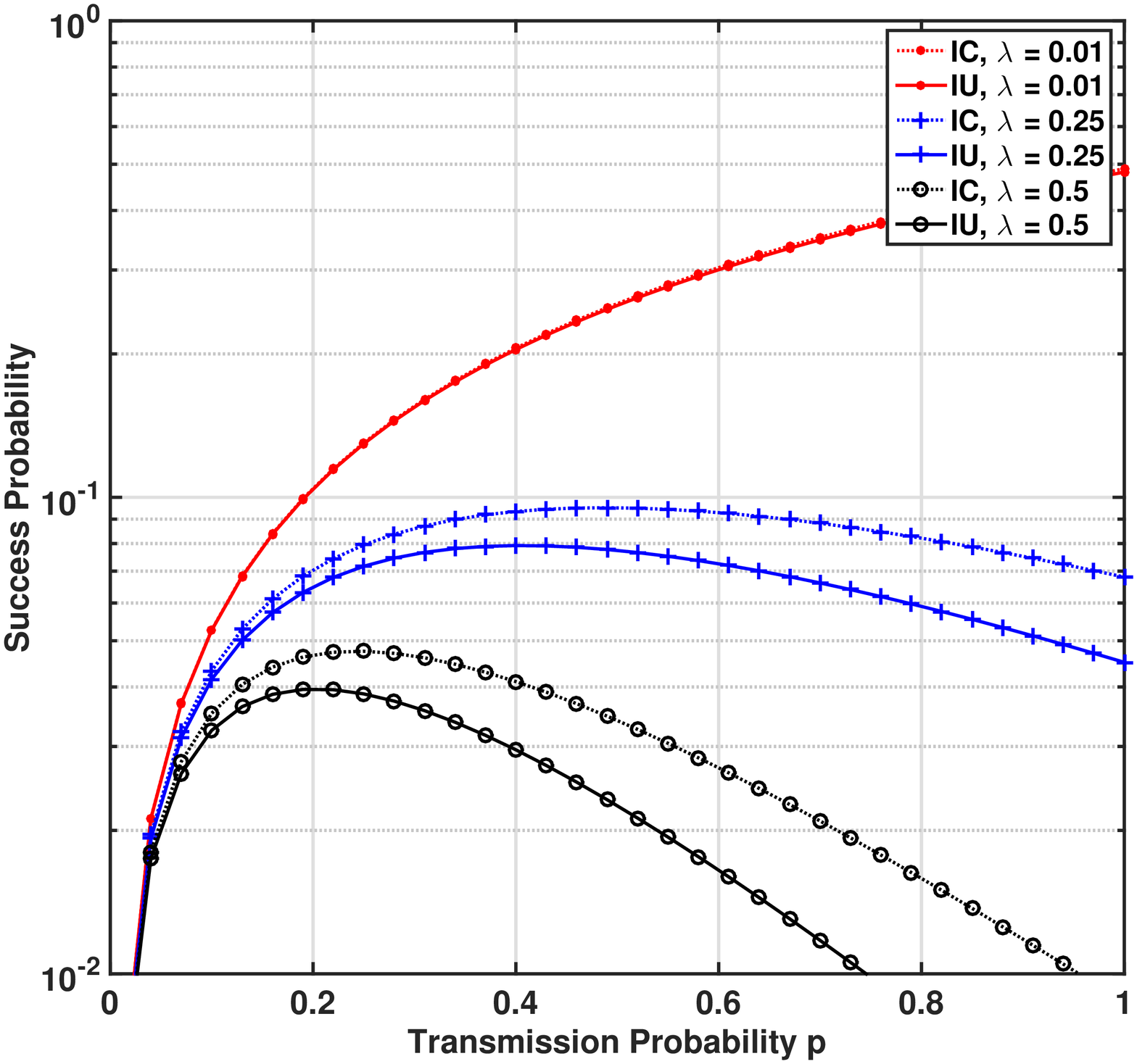}%
			\caption{}%
			\label{subfig-2b}%
			\end{subfigure}\hspace{0.1cm}%
		\begin{subfigure}{.65\columnwidth}
		\includegraphics[width=\textwidth, height=\textwidth]{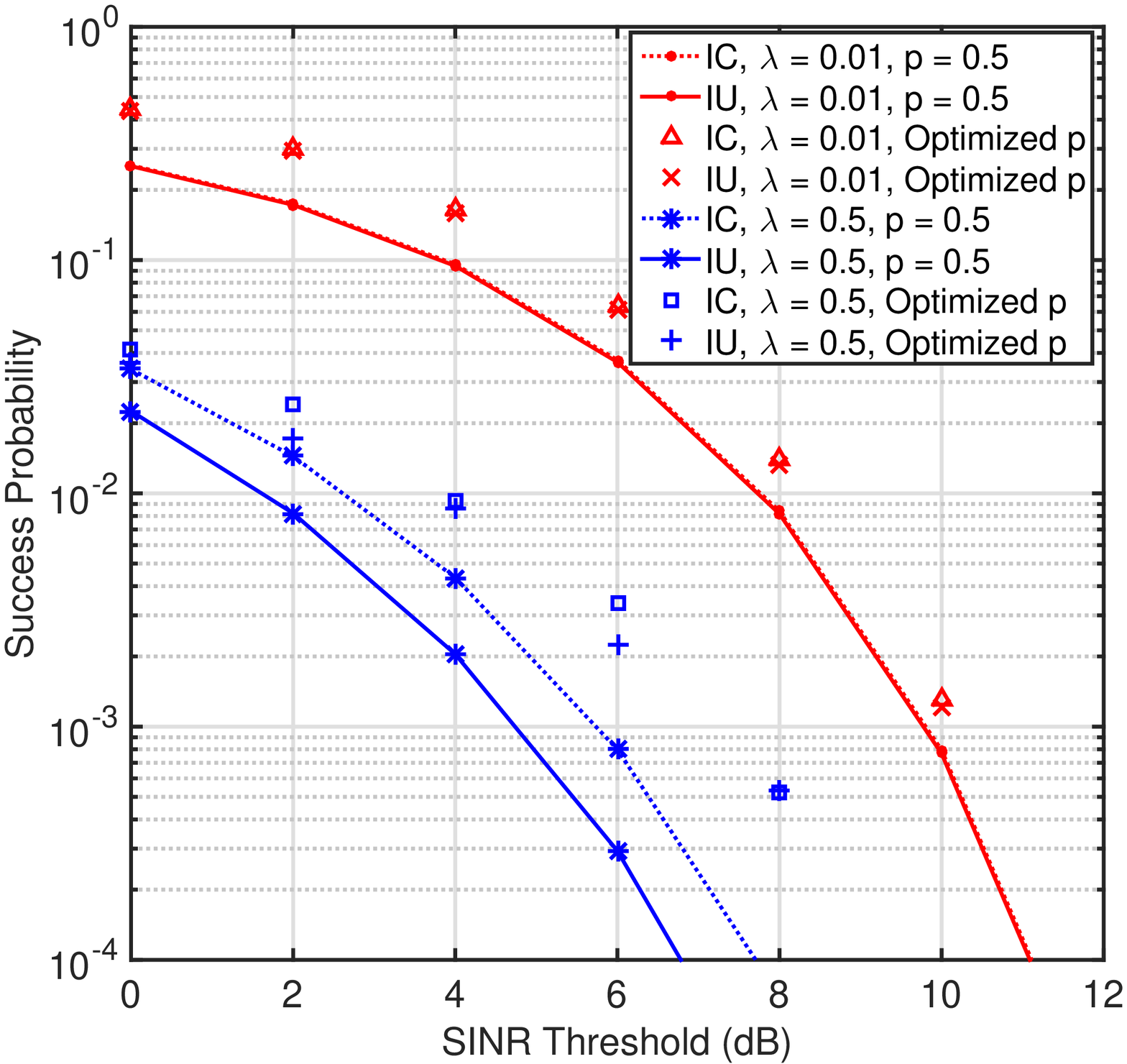}%
		\caption{}%
		\label{subfig-2c}%
		\end{subfigure}%
	\caption{Success probability versus (a) node density, (b) transmission probability, and (c) SINR threshold.}
	\label{fig2}
\end{figure*} 

\begin{figure*}%
	\centering
		\begin{subfigure}{.65\columnwidth}
		\includegraphics[width=\textwidth, height=\textwidth]{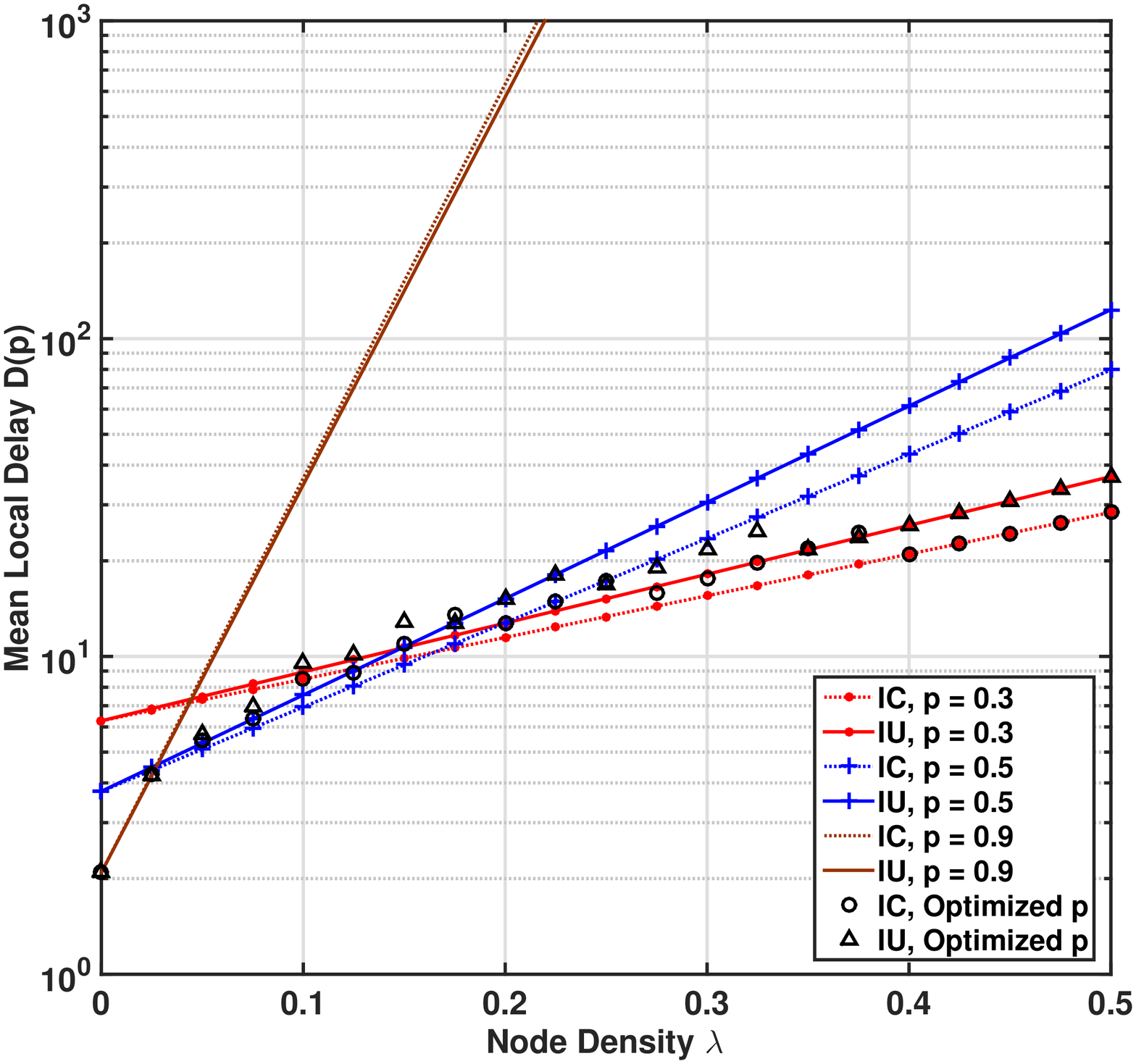}%
		\caption{}%
		\label{subfig-3a}%
		\end{subfigure}\hspace{0.1cm}%
			\begin{subfigure}{.65\columnwidth}
			\includegraphics[width=\textwidth, height=\textwidth]{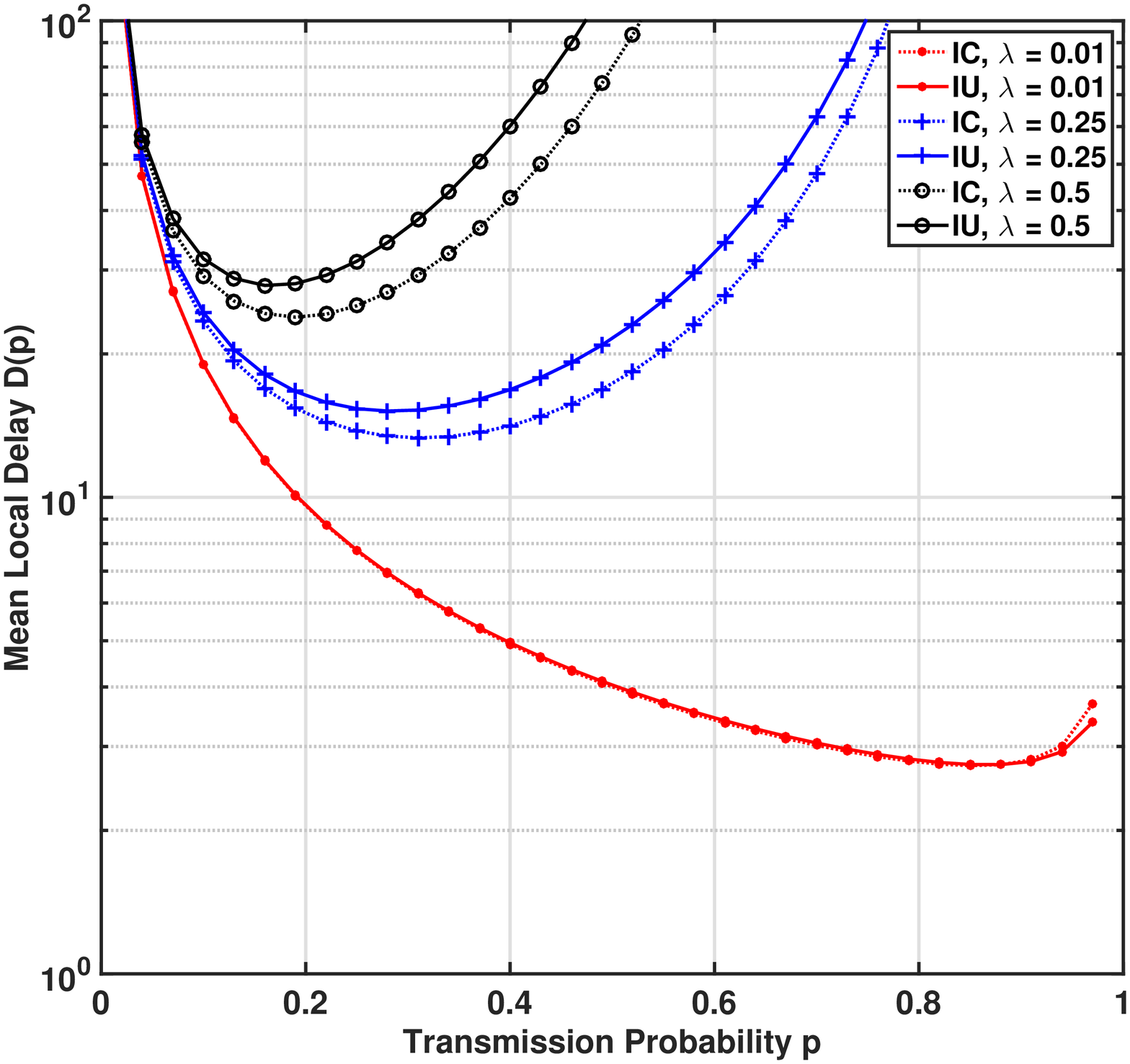}%
			\caption{}%
			\label{subfig-3b}%
			\end{subfigure}\hspace{0.1cm}%
		\begin{subfigure}{.65\columnwidth}
		\includegraphics[width=\textwidth, height=\textwidth]{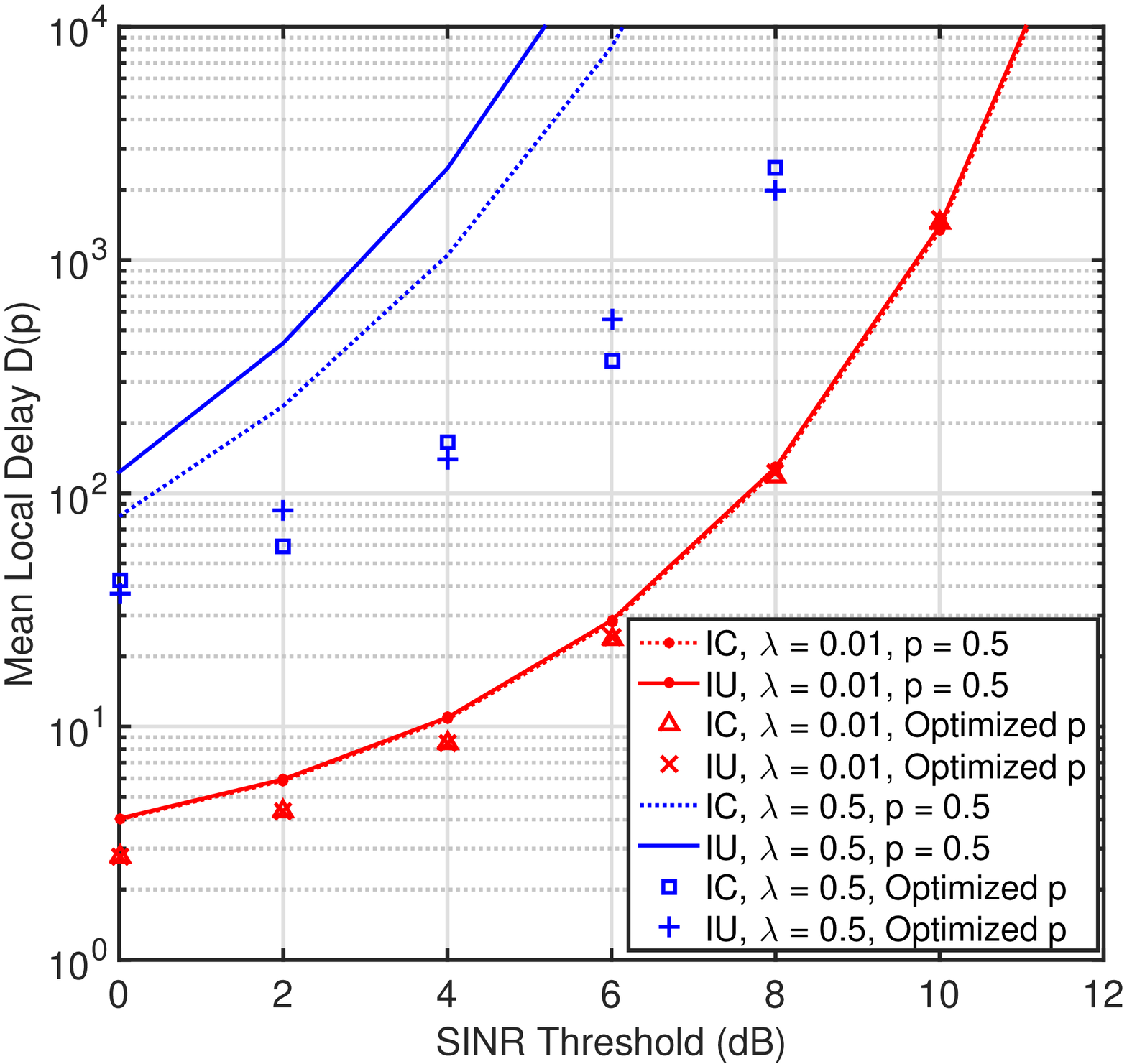}% 
		\caption{}%
		\label{subfig-3c}%
		\end{subfigure}%
	\caption{Mean local delay versus (a) node density, (b) transmission probability, and (c) SINR threshold.} 
	\label{fig3}
\end{figure*}

\end{document}